%%%%%%%%%%%%%%%%%%%%%%%%%%%%%%%
%% Don't change this stuff... %
%%%%%%%%%%%%%%%%%%%%%%%%%%%%%%%

\documentclass[12pt,reqno]{amsart}

% Import useful packages
\usepackage{array}
\usepackage{graphics}
\usepackage{amssymb, amsthm, amsmath, amsfonts,amscd}

\usepackage{multicol}
\usepackage{xfrac}
\usepackage{enumitem}
\usepackage{fancyhdr}
\usepackage{hyperref}
\usepackage{comment}
\usepackage{tikz-cd}
\usetikzlibrary{calc, positioning, matrix, arrows, decorations.pathmorphing, shapes, backgrounds}
\usepackage{pgf}
\usepackage{pgfplots}
\pgfplotsset{compat=1.6}
\usepackage{lpic}
\usepackage[numbers,sort&compress]{natbib}
\usepackage{pgfornament}

% Set margins and other page properties
\setlength{\paperwidth}{8.5in}
\setlength{\paperheight}{11in}
\setlength{\voffset}{-0.15in}
\setlength{\topmargin}{0in}
\setlength{\headheight}{12pt}
\setlength{\headsep}{8pt}
\setlength{\footskip}{30pt}
\setlength{\textheight}{8.5in}
\setlength{\hoffset}{0in}
\setlength{\oddsidemargin}{0in}
\setlength{\evensidemargin}{0in}
\setlength{\textwidth}{6.5in}
\setlength{\parindent}{0in}
\setlength{\parskip}{9pt}

% Theorem environments

\theoremstyle{plain} %italicizes text
\newtheorem{theorem}{Theorem}[section]

\newtheorem{lemma}[theorem]{Lemma}
\newtheorem{cor}[theorem]{Corollary}

\theoremstyle{remark}
\newtheorem{exr}{Exercise}

\theoremstyle{definition}

% Fancy letters

\newcommand{\R}{\mathbb R}
\renewcommand{\P}{\mathbb P}

\newcommand{\W}{\mathbf W}

\newcommand{\Y}{\mathbf{Y}_{\text{in}}}
\DeclareMathOperator*{\argmax}{arg\,max}

% Arrows

\newcounter{sarrow}

% Functions and operators

% Relations

\newsymbol\dnd 232D

% Other

\newcommand{\ignore}[1]{}

% Formatting lists

\newcommand{\squishlist}
{
\begin{list}{$\bullet$}
{
\setlength{\parskip}{0pt} \setlength{\topsep}{0pt} \setlength{\partopsep}{0pt}
\setlength{\parsep}{0pt}  \setlength{\itemsep}{0pt}
}
}
\newcommand{\squishend}{\end{list}}

% Begin document
\begin{document}

% Set pagestyle
\pagestyle{plain}{
\lhead{} \chead{} \rhead{}
\lfoot{} \cfoot{\thepage} \rfoot{}
}

%%%%%%%%%%%%%%%%%%%%%%%
%%  Below this is    %%
%%  where you write  %%
%%  the summary!     %%
%%%%%%%%%%%%%%%%%%%%%%%

\title{Noise Quantification and Control in Circuits via Strong Data-Processing Inequalities}
\author{Chenyang Sun}
%\thanks{Kate / emotional support}
\date{\today}

\begin{abstract}
This essay explores strong data-processing inequalities (SPDI's) as they appear in the work of Evans and Schulman \cite{ES} and von Neumann \cite{vN} on computing with noisy circuits. We first develop the framework in \cite{ES}, which leads to lower bounds on depth and upper bounds on noise that permit reliable computation. We then introduce the $3$-majority gate, introduced by \cite{vN} for the purpose of controlling noise, and obtain an upper bound on noise necessary for its function. We end by generalizing von Neumann's analysis to majority gates of any order, proving an analogous noise threshold and giving a sufficient upper bound for order given a desired level of reliability. 

The presentation of material has been modified in a way deemed more natural by the author, occasionally leading to simplifications of existing proofs. Furthermore, many computations omitted from the original works have been worked out, and some new commentary added. The intended audience has a rudimentary understanding of information theory similar to that of the author.
\end{abstract}
\maketitle
\thispagestyle{empty}
\numberwithin{equation}{section}

%%SECTION 1

\section{Introduction to SPDI's and Noisy computing} \label{sect:1}

In information theory, data processing inequalities (DPI's) describe the loss of information under transformation. In a way, information behaves like some physical quantity that satisfies a one-sided conservation law, stipulating that it can be destroyed but not created.

DPI's take various forms as different applications call for information to be measured in different ways; some common measures include Shannon entropy (amount of uncertainty resolved in learning the outcome of a random variable, or message source), Kullback-Leibler divergence (the distinguishability of one distribution from another), and mutual information (the amount of dependence between the input and output of a communication channel). In this essay we explore mutual information in the context of analyzing circuits built from \textit{noisy}, or \textit{error}-prone, components. We use the term \textit{erroneous} to describe the output of a component, and \textit{noisy} to describe the component itself, where \textit{noise} is often quantified by the probability of \textit{error}.

A basic form of the DPI for mutual information is a standard result and can be found as Theorem 2.8.1 in \cite{TC}:
\begin{theorem}\label{thm: dpi}
    Let $X, Y, Z$ be random variables taking finitely many values, with $X\--Y\--Z$ forming a Markov chain. It holds that
    \begin{align}
        I(X;Z)\le I(X;Y).
    \end{align}
\end{theorem}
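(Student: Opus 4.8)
The plan is to derive the inequality from the chain rule for mutual information together with the nonnegativity of conditional mutual information, both of which may be taken as standard (see \cite{TC}). First I would expand the joint mutual information $I(X; Y, Z)$ in the two ways permitted by the chain rule:
\begin{align}
    I(X; Y, Z) \;=\; I(X; Y) + I(X; Z \mid Y) \;=\; I(X; Z) + I(X; Y \mid Z). \label{eq:chainrule}
\end{align}
Each equality is a one-line computation: writing $I(X; Y, Z) = H(X) - H(X \mid Y, Z)$ and telescoping through $H(X \mid Y)$ (respectively $H(X \mid Z)$) gives the two decompositions, or one simply expands all quantities in terms of the joint probability mass function of $(X, Y, Z)$.

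The only step with real content is translating the Markov hypothesis into a statement about \eqref{eq:chainrule}. To say that $X, Y, Z$ form a Markov chain is precisely to say that $X$ and $Z$ are conditionally independent given $Y$, i.e. $p(x, z \mid y) = p(x \mid y)\, p(z \mid y)$ for every $y$ with $p(y) > 0$. Substituting this factorization into the defining sum
\begin{align}
    I(X; Z \mid Y) = \sum_{x,y,z} p(x,y,z) \log \frac{p(x, z \mid y)}{p(x \mid y)\, p(z \mid y)}
\end{align}
makes every logarithm equal to $\log 1 = 0$, so $I(X; Z \mid Y) = 0$. Plugging this into \eqref{eq:chainrule} yields $I(X; Y) = I(X; Z) + I(X; Y \mid Z)$.

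It then remains to observe that $I(X; Y \mid Z) \ge 0$. Indeed $I(X; Y \mid Z) = \sum_z p(z)\, I(X; Y \mid Z = z)$, and each $I(X; Y \mid Z = z)$ is an ordinary mutual information between conditional laws, hence a Kullback--Leibler divergence, which is nonnegative by Jensen's inequality (the Gibbs inequality). Dropping this nonnegative term from $I(X; Y) = I(X; Z) + I(X; Y \mid Z)$ gives $I(X; Z) \le I(X; Y)$, as claimed. I do not anticipate a genuine obstacle: the only things to get right are the bookkeeping in \eqref{eq:chainrule} and the clean use of the Markov factorization. It is worth noting that, since a Markov chain read in reverse is again a Markov chain, the identical argument also yields $I(X; Z) \le I(Y; Z)$, and both inequalities will be used in the circuit applications to follow.
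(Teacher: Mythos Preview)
Your argument is correct and is essentially the standard proof from \cite{TC}, Theorem 2.8.1: expand $I(X;Y,Z)$ two ways via the chain rule, use the Markov condition to kill $I(X;Z\mid Y)$, and drop the nonnegative $I(X;Y\mid Z)$. The paper does not supply its own proof of this theorem---it simply cites it as a standard result from \cite{TC}---so there is nothing further to compare.
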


In the context of circuit design, the Markov chain $X\--Y\--Z$ models the operation of a noisy gate that takes $X$ as an input, performs computation that results in $Y$, and outputs $Z$, a version of $Y$ corrupted by some noise that is directly applied to $Y$, hence the Markov chain condition. In practice, the noise is often modeled by a conditional distribution $p_{Z\mid Y}$.

Theorem \ref{thm: dpi} can be interpreted qualitatively as saying that the noised output $Z$, compared to the intended output $Y$, is less dependent on $X$. This unsurprising interpretation captures the expectation that the output of a computation should depend on its input, and that the presence of noise weakens this dependence. 

In practice, however, qualitative inequalities fail to quantify how much information is lost, which is necessary for estimating the reliability of a circuit and designing noise-controlling strategies. Their quantified versions are known as strong data-processing inequalities (SPDI's), which often take the general form:
\[\frac{\text{information post-processing}}{\text{information pre-processing}}\le c,\]
where $c\in[0,1)$ is an upper bound on the proportion of information transmitted by the processing. This upper bound is usually made universal, in the sense that it depends solely on the properties of the processing channel, and not on the distributions that pass through.

These quantified inequalities naturally arise from many settings, from information percolation on directed graphs to various problems in constrained estimation; Chapter 33 in \cite{PW} gives a brief overview of various applications. This essay explores SPDI's that arise from the decay of information signals transmitted through noisy media, which is a suitable description of computation done by circuits built from noisy components.

Von Neumann introduced the ``noisy circuit" model of computation in an effort to study the limits of physical circuits. Since large-scale computation relies essentially on the propagation of long chains of events \cite{vN}, it is a legitimate concern that noise might accumulate to such an extent that the output of computation has little dependence on the input. Thus, it is of interest to construct circuits that can prevent the noise of unreliable components from accumulating to the point of catastrophic failure, and carry out computation with high probability of correctness. To this end, it is natural to ask:
\begin{itemize}
    \item Is such a construction possible?
    \item If so, at what cost?
\end{itemize}
The first part of this essay, based on \cite{ES}, addresses the latter question by showing that if such an noise-controlling construction is possible, the circuit is necessarily deeper (and thus less spatially and temporally efficient) compared to an noise-free circuit. 

The second part of this essay, based on \cite{vN}, demonstrates that under certain conditions, the construction of an noise-controlling circuit is possible in principle, albeit costly.

\section{Noise Quantification in Circuits}

\subsection{SPDI for gates as binary channels}

To model and analyze a circuit as a whole, it is helpful to first examine the processes that take place in individual gates. In this model, a gate takes a collection of binary inputs $X$, computes a single binary result $Y$, and outputs some noisy binary result $Z$. For now we restrict $X$ to a single binary input; the more general case readily follows.

To quantify the loss of information going from $Y$ to $Z$, one needs a model for the noise. Since the noise is applied directly to $Y$, it can be modeled as a conditional distribution $p_{Z\mid Y}$. Since $Y$ and $Z$ are both binary random variables, the behavior of noise can be completely described by two ways in which an error can happen: 
\begin{align}
    a:=&\ \P(Z=1\mid Y=0)=p_{Z\mid Y}(1\mid 0),\\
    b:=&\ \P(Z=0\mid Y=1)=p_{Z\mid Y}(0\mid 1),
\end{align}
where $a$ (resp. $b$) is the probability that a $0$ (resp. $1$) from $Y$ is corrupted.

In this essay, $H(B)$ shall denote the entropy of a discrete random variable $B$, and $h(p)$ the binary entropy function at $p$. Thus for a Bernoulli distribution $B_p$ with parameter $p$, it holds that $H(B_p)=h(p)$.

To prove a stronger, quantified version of the data processing inequality (Theorem \ref{thm: dpi}), it is helpful to examine the quantities $I(X;Y)$ and $I(X;Z)$ in this setting. For $X$ a Bernoulli random variable, one can rewrite
\begin{align}
    I(X;Y)=&\ H(Y)-H(Y\mid X)\\
    =&\ H(Y)-p_X(0)H(Y\mid X=0)-p_X(1)H(Y\mid X=1).
\end{align}
Taking $p:=p_X(0)$, $y_0:=p_{Y\mid X}(0\mid 0)$, and $y_1:=p_{Y\mid X}(0\mid 1)$, the above can be re-written as
\begin{align}
    &\ h(p_Y(0))-ph(y_0)-(1-p)h(y_1)\\
     =&\ h(py_0+(1-p)y_1)-ph(y_0)-(1-p)h(y_1).
\end{align}

By performing the above computation with $Z$ in place of $Y$, one has
\begin{align}
    I(X;Z)=&\ h(p_Z(0))-ph(z_0)-(1-p)h(z_1).
\end{align}
where $z_i:=p_{Z\mid X}(0\mid i)$. 

One can express $z_i$ using $y_i$ via expanding through $Y$ using the law of total expectation, and then applying the Markov chain property:
\begin{align}
    &\ \P(Z=0\mid X=i)\\
    =&\ \P(Z=0\mid X=i,Y=0)\P(Y=0\mid X=i)+\P(Z=0\mid X=i,Y=1)\P(Y=1\mid X=i)\\
    =&\ \P(Z=0\mid Y=0)\P(Y=0\mid X=i)+\P(Z=0\mid Y=1)\P(Y=1\mid X=i)\\
    =&\ (1-a)y_i+b(1-y_i).
\end{align}

For notational convenience, define $A(t):=(1-a)t+b(1-t)$, so that $z_i=A(y_i)$. Observe that by dropping the conditioning on $X$, the above calculations also yield $p_Z(0)=A(p_Y(0))$. We thus have
\begin{align}
    I(X;Z)=&\ (h\circ A)(p_Y(0))-p(h\circ A)(y_0)-(1-p)(h\circ A)(y_1)\\
    =&\ (h\circ A)(py_0+(1-p)y_1)-p(h\circ A)(y_0)-(1-p)(h\circ A)(y_1)
\end{align}

One may observe that this expression bears much resemblance to the definition of convexity. To take advantage of this observation, for a function $f$, arguments $x,y$ in the domain of $f$, and $t\in[0,1]$, we define a convexity evaluation
\begin{align}
    \Delta^2_{f}(x,y;t):=f(tx+(1-t)y)-tf(x)-(1-t)f(y),
\end{align}
so that we may write
\begin{align}\label{eqn: mutual}
    \frac{I(X;Z)}{I(X;Y)}=\frac{\Delta^2_{h\circ A}(y_0,y_1;p)}{\Delta^2_h(y_0,y_1;p)}.
\end{align}

We wish to obtain an upper bound for the ratio of convexity evaluations, over the parameters $y_0, y_1, p$; this can be given by a ratio of second derivatives through a general variational lemma, which is a version of Lemma 1 from \cite{ES}.
\begin{lemma}\label{lem: ratiosup}
    Let $f,g:[0,1]\to\R$ be twice-differentiable and convex. Then it holds that
    \begin{align}
        \sup_{\substack{x\ne y\in[0,1] \\ t\in(0,1)}}\frac{\Delta^2_f(x,y;t)}{\Delta^2_g(x,y;t)}=\sup_{t\in(0,1)}\frac{f''(t)}{g''(t)}.
    \end{align}
    Equality is attained as $x,y$ approach a value of $t$ achieving the right-hand supremum.
\end{lemma}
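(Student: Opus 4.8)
The plan is to prove the two inequalities $\le$ and $\ge$ separately. For the direction $\sup \Delta^2_f/\Delta^2_g \le \sup f''/g''$, set $c := \sup_{t\in(0,1)} f''(t)/g''(t)$ (assume $c < \infty$, else nothing to prove). Then $cg'' - f'' \ge 0$ on $(0,1)$, so $cg - f$ is convex on $[0,1]$. Convexity of a function $\varphi$ is exactly the statement that $\Delta^2_\varphi(x,y;t) \ge 0$ for all $x,y \in [0,1]$ and $t \in (0,1)$; applying this to $\varphi = cg - f$ and using that $\Delta^2$ is linear in its subscript function (i.e. $\Delta^2_{cg-f} = c\,\Delta^2_g - \Delta^2_f$) gives $\Delta^2_f(x,y;t) \le c\,\Delta^2_g(x,y;t)$. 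Since $g$ is convex and $x \ne y$, we should check $\Delta^2_g(x,y;t) > 0$ — here one needs $g'' > 0$ somewhere on the open interval $(tx+(1-t)y$'s neighborhood$)$, or more carefully, that $g$ is not affine on $[\min(x,y),\max(x,y)]$; I would either add the mild hypothesis that $g'' > 0$ on $(0,1)$ (which holds for $g = h$, the binary entropy) or handle the degenerate case by a limiting argument. Dividing yields $\Delta^2_f/\Delta^2_g \le c$, hence the sup over all admissible $(x,y,t)$ is $\le c$.

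For the reverse direction $\sup \Delta^2_f/\Delta^2_g \ge \sup f''/g''$, the idea is that $\Delta^2_\varphi(x,y;t)$ recovers a second-derivative-like quantity in the limit $x,y \to t$. Fix $t_0 \in (0,1)$ and take $x = t_0 + (1-t)s$, $y = t_0 - ts$ for small $s$, so that $tx + (1-t)y = t_0$ exactly. Then a Taylor expansion of $\varphi$ around $t_0$ gives $\Delta^2_\varphi(x,y;t) = \tfrac12 t(1-t)\varphi''(t_0) s^2 + o(s^2)$ as $s \to 0$ (the first-order terms cancel by the choice of $x,y$). Applying this to both $f$ and $g$, the ratio $\Delta^2_f(x,y;t)/\Delta^2_g(x,y;t) \to f''(t_0)/g''(t_0)$ as $s \to 0$ (using $g''(t_0) \ne 0$). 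Taking the supremum over $t_0$ and noting each value $f''(t_0)/g''(t_0)$ is a limit of admissible ratios gives $\sup \Delta^2_f/\Delta^2_g \ge \sup_{t_0} f''(t_0)/g''(t_0)$. This also proves the last sentence of the statement about how equality is approached.

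I expect the main obstacle to be the technical bookkeeping in the Taylor expansion — specifically, being careful that the remainder terms are genuinely $o(s^2)$ uniformly enough to survive division, and handling the possibility that $g''(t_0) = 0$ (in which case $f''(t_0)/g''(t_0)$ is $+\infty$ or undefined, and one argues the left side is also unbounded, or restricts to the regime where $g'' > 0$). A clean way to package the expansion is to write $\Delta^2_\varphi(x,y;t) = t(1-t)\int$-type remainder using the integral form of Taylor's theorem, or simply to invoke that for $C^2$ functions the divided second difference converges to $\tfrac12\varphi''$; since the problem only asks for twice-differentiability, I would state the expansion as $\varphi(t_0 + u) = \varphi(t_0) + \varphi'(t_0)u + \tfrac12\varphi''(t_0)u^2 + o(u^2)$, valid at the point $t_0$, and carry the $o(\cdot)$ terms through, which suffices since we only take a limit at fixed $t_0$. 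The remaining steps — linearity of $\Delta^2$ in its function argument, the convexity$\iff \Delta^2 \ge 0$ equivalence, and positivity of $\Delta^2_g$ — are routine.
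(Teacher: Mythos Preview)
Your two-direction strategy is the standard one, and the paper itself does not supply a proof here --- it simply cites \cite{ES}, Lemma~1. The paper does, however, later invoke exactly your limiting computation, remarking that ``as part of proving Lemma~\ref{lem: ratiosup}, one verifies that $\lim_{y\to t}\Delta^2_{h\circ A}(t,y;p)/\Delta^2_h(t,y;p)=(h\circ A)''(t)/h''(t)$,'' so your $\ge$ direction matches what the author has in mind.

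There is one slip to fix. With the paper's convention
\[
\Delta^2_\varphi(x,y;t)=\varphi(tx+(1-t)y)-t\varphi(x)-(1-t)\varphi(y),
\]
convexity of $\varphi$ is equivalent to $\Delta^2_\varphi\le 0$, not $\ge 0$. You make this sign error twice in the $\le$ argument (once in asserting $\Delta^2_{cg-f}\ge 0$, once in claiming $\Delta^2_g>0$) and once in the Taylor coefficient (which should be $-\tfrac12\, t(1-t)\varphi''(t_0)s^2$). In each place the errors cancel in the ratio, so your conclusions are correct, but the intermediate inequalities as written are reversed. Your instinct to add the hypothesis that $g''$ does not vanish on $(0,1)$ is the right fix for the division step and holds in the intended application, where $g=h$ is the binary entropy function; note incidentally that $h$ and $h\circ A$ are \emph{concave}, not convex as the lemma is stated, which is harmless since the ratio is unchanged under $f\mapsto -f$, $g\mapsto -g$.
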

\begin{proof}
    See \cite{ES}, Lemma 1. 
\end{proof}

Applying Lemma \ref{lem: ratiosup} with $f=h\circ A$, $g=h$, $(x,y,t)=(y_0,y_1,p)$, one obtains
\begin{align}\label{eqn: 2deriv}
    \frac{I(X;Z)}{I(X;Y)}\le\sup_{p\in(0,1)}\frac{h''(A(p))(A'(p))^2}{h''(p)},
\end{align}
and thus we obtain Theorem 1 from \cite{ES}.
\begin{theorem}\label{thm: gatespdi}
    Let $X,Y$ be binary random variables, and $Z$ be the output of $Y$ through a noisy channel parametrized by error probabilities $a=p_{Z\mid Y}(1\mid 0)$, $b=p_{Z\mid Y}(0\mid 1)$. Then it holds that 
    \begin{align}
        \frac{I(X;Z)}{I(X;Y)}\le c_{a,b}
    \end{align}
    where 
    \begin{align}
        c_{a,b}=1-\left(\sqrt{b(1-a)}+\sqrt{a(1-b)}\right)^2.
    \end{align}
\end{theorem}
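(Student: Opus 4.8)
The plan is to take the bound \eqref{eqn: 2deriv} as the starting point and to evaluate the supremum on its right-hand side explicitly. Two observations reduce it drastically: the binary entropy function satisfies $h''(t) = -1/(t(1-t))$ up to a positive constant (depending only on the base of the logarithm, which cancels in the ratio), and $A(t) = b + (1-a-b)t$ is affine, so $A'(t) = 1-a-b$ is a constant. Substituting these into \eqref{eqn: 2deriv} gives
\[
\frac{I(X;Z)}{I(X;Y)} \;\le\; (1-a-b)^2 \sup_{p\in(0,1)} \frac{p(1-p)}{A(p)\bigl(1-A(p)\bigr)}.
\]
When $a+b=1$ the prefactor is $0$, and one checks directly from the formula that $c_{a,b}=0$ too, so that case is done; likewise if $a$ or $b$ equals $0$ or $1$ the claim follows by a direct check (or by continuity in $(a,b)$). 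So assume $a,b\in(0,1)$ with $a+b\ne 1$.

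To compute the supremum I would substitute $v=A(p)$. As $p$ runs over $(0,1)$, $v$ runs over the open interval with endpoints $A(0)=b$ and $A(1)=1-a$, and $p(1-p)$ becomes $(v-b)(1-a-v)/(1-a-b)^2$, so the task reduces to maximizing
\[
\psi(v) := \frac{(v-b)(1-a-v)}{v(1-v)}
\]
over that interval. Here $\psi$ is continuous, vanishes at the two endpoints $v=b$ and $v=1-a$, and is strictly positive in the interior, so the maximum is attained at an interior critical point.

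The core of the argument is a short calculus computation. Writing $\psi(v)=1+\dfrac{(b-a)v-b(1-a)}{v(1-v)}$ and differentiating, the stationarity equation simplifies to the quadratic $(b-a)v^2-2b(1-a)v+b(1-a)=0$, whose discriminant works out to $4ab(1-a)(1-b)$. Setting $s:=\sqrt{b(1-a)}$ and $r:=\sqrt{a(1-b)}$ and using the identity $b-a=s^2-r^2$, the two roots factor cleanly as $s/(s-r)$ and $s/(s+r)$, and only $v^{\star}:=s/(s+r)$ lies in $(0,1)$, so it is the maximizer. Then $1-v^{\star}=r/(s+r)$, and pulling $\sqrt{b(1-b)}$ out of $v^{\star}-b$ and $\sqrt{a(1-a)}$ out of $1-a-v^{\star}$ makes the remaining factors collapse, giving
\[
\psi(v^{\star}) = \Bigl(\sqrt{(1-a)(1-b)} - \sqrt{ab}\,\Bigr)^{2}.
\]
Feeding this back, $\dfrac{I(X;Z)}{I(X;Y)} \le (1-a-b)^2\cdot\dfrac{\psi(v^{\star})}{(1-a-b)^2}=\psi(v^{\star})$, and a one-line expansion confirms the identity $\bigl(\sqrt{(1-a)(1-b)}-\sqrt{ab}\,\bigr)^{2}=1-\bigl(\sqrt{b(1-a)}+\sqrt{a(1-b)}\,\bigr)^{2}=c_{a,b}$, which finishes the proof.

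I expect the only real obstacle to be the algebraic bookkeeping: simplifying the discriminant, factoring the roots, reducing $v^{\star}-b$ and $1-a-v^{\star}$ to a common form, and then recognizing that the clean expression obtained is exactly $c_{a,b}$ after rewriting. The substitution $v=A(p)$ together with the shorthands $s,r$ is precisely what keeps these manipulations tractable; without them the optimization in $p$ directly is a messy rational function.
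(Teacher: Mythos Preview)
Your proposal is correct and follows essentially the same approach as the paper: both start from \eqref{eqn: 2deriv} and perform the explicit maximization. The paper's proof simply states the optimizer $p^{*}=\sqrt{b(1-b)}\big/\bigl(\sqrt{a(1-a)}+\sqrt{b(1-b)}\bigr)$ and the resulting bound without any intermediate steps, whereas you actually carry out the computation; your substitution $v=A(p)$ is a clean way to do it and yields the equivalent optimizer $v^{\star}=A(p^{*})$.
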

\begin{proof}
    The maximization of the right-hand side of \eqref{eqn: 2deriv} can be done explicitly, with 
    \begin{align}\label{eqn: optimalrate}
        p^*=\frac{\sqrt{b(1-b)}}{\sqrt{a(1-a)}+\sqrt{b(1-b)}}
    \end{align}
    yielding the claimed bound $c_{a,b}$.
\end{proof}

Although not discussed in the original paper, one might wonder whether an analogous lower bound exists; a discussion is postponed to the end of this subsection. For now, we extend Theorem \ref{thm: gatespdi} to take arbitrary inputs to obtain Corollary 1 from \cite{ES}.
\begin{cor}\label{cor: conditioned}
    Let $X,Y,Z$ be as in Theorem \ref{thm: gatespdi}, and $Q$ a random variable, not necessarily binary, such that
    \begin{align}
        (Q,X)\--Y\--Z
    \end{align}
    forms a Markov chain. Then
    \begin{align}
        \frac{I(X;Z\mid Q)}{I(X;Y\mid Q)}\le c_{a,b}.
    \end{align}
\end{cor}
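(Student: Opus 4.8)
The plan is to reduce to Theorem \ref{thm: gatespdi} by conditioning on the value of $Q$ and then averaging; I will treat the case where $Q$ takes countably many values, which covers the uses in this essay (a general $Q$ only requires replacing the ensuing sums by integrals against the law of $Q$, or a routine discretization argument).

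First I would check that the relevant hypotheses survive conditioning. The strengthened Markov condition $(Q,X)\--Y\--Z$ says exactly that $p_{Z\mid Y,X,Q}=p_{Z\mid Y}$, so for every $q$ with $p_Q(q)>0$ we have $p_{Z\mid Y,X,Q=q}=p_{Z\mid Y}$: inside the conditioned probability space $\{Q=q\}$, the variable $Z$ is still obtained from $Y$ through the \emph{same} noisy channel, with error probabilities exactly $a$ and $b$, and $X,Y$ remain binary. Theorem \ref{thm: gatespdi} therefore applies verbatim to the triple $(X,Y,Z)$ conditioned on $\{Q=q\}$, giving
\[ I(X;Z\mid Q=q)\le c_{a,b}\,I(X;Y\mid Q=q). \]
(When $I(X;Y\mid Q=q)=0$ this still holds, since then $I(X;Z\mid Q=q)=0$ by Theorem \ref{thm: dpi}, so no division is needed at this stage.) Averaging over $Q$ and using the definition of conditional mutual information,
\[ I(X;Z\mid Q)=\sum_q p_Q(q)\,I(X;Z\mid Q=q)\le c_{a,b}\sum_q p_Q(q)\,I(X;Y\mid Q=q)=c_{a,b}\,I(X;Y\mid Q), \]
and dividing by $I(X;Y\mid Q)$ (assumed positive, else the asserted ratio is vacuous) yields the corollary.

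The only step I expect to require genuine care is the first one: ensuring that the bound $c_{a,b}$ does not degrade under conditioning, i.e.\ that the channel $Y\to Z$ as seen within $\{Q=q\}$ has error probabilities exactly $a,b$ and in particular does not depend on $X$. This is precisely what the hypothesis $(Q,X)\--Y\--Z$ buys us beyond the weaker $X\--Y\--Z$, and it is worth spelling out; everything after that is just linearity of the conditional mutual information in the conditioning distribution.
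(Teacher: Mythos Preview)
Your proof is correct and follows essentially the same approach as the paper: verify that conditioning on $\{Q=q\}$ preserves both the Markov property and the channel parameters $(a,b)$, apply Theorem \ref{thm: gatespdi} pointwise, then average over $Q$. Your averaging step sums the pointwise inequalities $I(X;Z\mid Q=q)\le c_{a,b}\,I(X;Y\mid Q=q)$ directly, whereas the paper phrases the same step via the mediant inequality $\sum a_q/\sum b_q\le\max_q a_q/b_q$; your version is slightly cleaner in that it handles the degenerate case $I(X;Y\mid Q=q)=0$ without any division.
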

\begin{proof}
    We first show that for any $q$, the distributions of $X,Y,Z$ given $Q=q$ satisfy the assumptions of Theorem \ref{thm: gatespdi}, and finish with an averaging argument.

    Since $Z$ is independent from $(Q,X)$ given $Y$, $Z$ is independent from each of $Q,X$ given $Y$. Thus
    \begin{align}
        p_{Z\mid X,Y,Q=q}=p_{Z\mid Y}=p_{Z\mid Y,Q=q},
    \end{align}
    so the conditional distributions given $Q=q$ still form a Markov chain. Furthermore, the second equality implies the error probabilities $p_{Z\mid Y}(1\mid 0)$, $p_{Z\mid Y}(0\mid 1)$ are unchanged after conditioning by $Q=q$. The conditional distributions of $X,Y,Z$ given $Q=q$ thus satisfy the assumptions of Theorem \ref{thm: gatespdi}, and it thus holds that
    \begin{align}
        \frac{I(X;Z\mid Q=q)}{I(X;Y\mid Q=q)}\le c_{a,b}.
    \end{align}
    Interpreting conditioning by $Q$ as taking a weighted average over $q$, we have
    \begin{align}
        \frac{I(X;Z\mid Q)}{I(X;Y\mid Q)}=&\ \frac{\sum_q p_Q(q)I(X;Z\mid Q=q)}{\sum_q p_Q(q)I(X;Y\mid Q=q)}\\
        \le&\ \max_q \frac{p_Q(q)I(X;Z\mid Q=q)}{p_Q(q)I(X;Y\mid Q=q)}\\
        \le&\ c_{a,b}.
    \end{align}
\end{proof}

Figure \ref{fig: nonseconst} plots $c_{a,b}$. Observe that $c_{a,b}$ attains a maximum of $1$ at $(0,0)$ and $(1,1)$, which corresponds to the channel being noiseless or perfectly erroneous. In either case the output $Z$ can be determined with certainty with knowledge of input $Y$ (and vice versa), hence the ineffective bound of $1$ for information transmission.

Furthermore, observe that the transmission factor $c_{a,b}$ attains a minimum of $0$ along $a+b=1$. Indeed: along this line we have
\begin{align}
    p_{Z\mid Y}(1\mid 0)=a=1-b= p_{Z\mid Y}(1\mid 1),
\end{align}
and similarly $p_{Z\mid Y}(0\mid 0)=p_{Z\mid Y}(0\mid 1)$, meaning that the input $Y$ does not affect the distribution of the output $Z$, i.e., $Z$ is independent of $Y$.

\begin{figure}
    \centering
    \includegraphics[width=12cm]{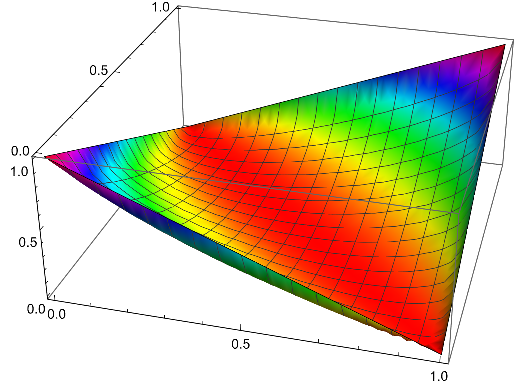}
    \caption{Plot of $c_{a,b}$ over $[0,1]\times[0,1]$, colored by height.}
    \label{fig: nonseconst}
\end{figure}

There are some interesting observations to be made regarding Lemma \ref{lem: ratiosup} and Theorem \ref{thm: gatespdi}. From Lemma \ref{lem: ratiosup}, one notices that the optimal transmission factor occurs when \begin{align}
    y_0=p_{Y\mid X}(0\mid 0),\ y_1=p_{Y\mid X}(0\mid 1)
\end{align}
are close to each other, and in particular to the quantity $p^*$ from Theorem \ref{thm: gatespdi}. The former observation, as Evans and Schulman remark, means that a signal must be weak in order to be transmitted at close to the optimal efficiency; however, weakness, i.e., closeness of $y_0,y_1$, does not imply close-to-optimal transmission. A counterexample is not given in \cite{ES}, but will be provided in the following discussion concerning a lower bound on the transmission factor.

One may also notice that the distribution of $Y$ resulting from optimal transmission ($y_0\approx y_1\approx p^*$), which is described by
\begin{align}
    \P(Y=0)=&\ p_{Y\mid X}(0\mid 0)p_X(0)+p_{Y\mid X}(0\mid 1)p_X(1)\\
    =&\ y_0p^*+y_1(1-p^*)\\
    \approx&\ p^*p^*+p^*(1-p^*)\\
    =&\ p^*=\frac{\sqrt{b(1-b)}}{\sqrt{a(1-a)}+\sqrt{b(1-b)}},
\end{align}
disagrees with the limiting distribution fixed by the channel, which satisfies $\P(Y=0)=a/(a+b)$ (see Lemma \ref{lem: limitingdist}). This may appear counterintuitive at first, but less so once one realizes that preserving a distribution is not equivalent to preserving information: two random variables can be identically distributed but independent and thus carry no mutual information.

One might naturally wonder whether a lower bound on the transmission factor, analogous to Theorem \ref{thm: gatespdi}, might be possible. It turns out that there is no nontrivial lower bound: as long as $a$ or $b$ is nonzero, there exist signals that are transmitted with arbitrarily low efficiency.

One can explicitly construct such signals. Suppose without loss of generality that $0<a<1$; as part of proving Lemma \ref{lem: ratiosup}, one verifies that for some fixed $t$,
\begin{align}
    \lim_{y\to t}\frac{\Delta^2_{h\circ A}(t,y;p)}{\Delta^2_h(t,y;p)}=\frac{(h\circ A)''(t)}{h''(t)}=\frac{h''(A(t))(1-a-b)^2}{h''(t)}.
\end{align}
Consider what happens for choices of $y_0, y_1$ very close to $1$, i.e., $Y$ sets $X$ to $0$ with high probability. Since $y_0$ and $y_1$ are close, by the above, the transmission factor $I(X;Z)/I(X;Y)$ is approximately
\begin{align}
    \frac{h''(A(y_0))(1-a-b)^2}{h''(y_0)}.
\end{align}
Taking $y_0\to 1$, we have $A(y_0)=(1-a)y_0+b(1-y_0)\to 1-a$, so the numerator approaches some constant (since $0<a<1$), while the denominator diverges to $-\infty$ and brings the information transmission factor arbitrarily close to $0$. This also exemplifies a weak signal that transmits with very low efficiency.

This conclusion is not surprising at a qualitative level: since $Y$ sets $X$ to $0$ with high probability, receiving a $0$ from $Y$ offers little information about $X$. The outcome of $X$ can be inferred with confidence only if the improbable event $Y=1$ occurs, which is readily overwhelmed by the false-positive $a=\P(Z=1\mid Y=0)$ when one can only observe $Z$.

\subsection{Minimum circuit depth and maximum component noise}

With sufficient understanding of what happens in each gate of a circuit, one can now analyze the behavior of the circuit as a whole. 

In this model of computation, a circuit consists of inputs and gates (collectively seen as vertices of a directed, acyclic graph) and wires carrying a binary value away from a vertex and into another, seen as edges directed from the outputting vertex to the accepting vertex. For clarity, wires may be identified with the information they carry, which is the output of its originating gate.

At this point, the original paper \cite{ES} makes the assumption that the noise channel is symmetric and remarks at the very end that the proofs still hold when the assumption is relaxed; in this essay we do not assume symmetry altogether.

It is helpful to first quantify the flow of information from a single input to an arbitrary collection of destinations within the circuit. Lemma \ref{lem: graphspdi}, a more general (assymetric) form of Lemma 2 in \cite{ES}, provides a certain form of tensorization of the mutual information between a single input variable and a collection of wires.

\begin{lemma}\label{lem: graphspdi}
    Let $G$ be a circuit composed of gates that are $(a,b)$-noisy as described in Theorem \ref{thm: gatespdi}, with inputs consisting of a collection of constants in $\{0,1\}$ and $X$, a single binary random variable.
    
    Let $\mathbf{W}$ be an arbitrary collection of wires in $G$, identified with the output (random values) they carry. It holds that
    \begin{align}
        I(X;\W)\le\sum_{X\xrightarrow[]{P}\W}c^{|P|},
    \end{align}
    where the sum is over all paths $P$ from $X$ to $\W$, $|P|$ the number of gates (excluding inputs) on the path $P$, and $c=c_{a,b}$ the information transmission bound from Corollary \ref{cor: conditioned}.

    For sake of clarity, a path from a vertex to a wire is taken to be a path to the originating vertex of the wire. Thus, paths passing through the same vertices are considered identical for the purpose of counting the number of paths. For the counting of path length, said originating vertex is included in the count.
\end{lemma}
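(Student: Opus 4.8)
\emph{Proof proposal.} The plan is to induct on the number of gates of $G$, at each step using Corollary~\ref{cor: conditioned} to strip off one gate. For the base case $G$ has no gates, so every wire of $\mathbf{W}$ is an input: then $I(X;\mathbf{W})\le H(X)\le 1$, and if $X$ itself belongs to $\mathbf{W}$ there is a single length-$0$ path from $X$ to $\mathbf{W}$ contributing $c^{0}=1$ to the right-hand side, while otherwise $X$ is independent of $\mathbf{W}$ and both sides are $0$. Before the inductive step I would make one reduction: if a gate $g$ of maximal depth has output wire $Z_g\notin\mathbf{W}$, then (being of maximal depth) $g$ feeds no other gate and hence lies on no path from $X$ to a wire of $\mathbf{W}$, so deleting $g$ alters neither $I(X;\mathbf{W})$ nor the path-sum; iterating, I may assume the output of every maximal-depth gate lies in $\mathbf{W}$ (or else land in the base case).

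Now fix a gate $g$ of maximal depth, with noiseless output $Y_g$ and noisy output $Z_g\in\mathbf{W}$, and let $\mathrm{In}(g)$ denote the wires feeding $g$, so $Y_g$ is a deterministic function of $\mathrm{In}(g)$. Set $\mathbf{W}'=\mathbf{W}\setminus\{Z_g\}$ and apply the chain rule:
\[
I(X;\mathbf{W})=I(X;\mathbf{W}')+I(X;Z_g\mid\mathbf{W}').
\]
Since the noise acting in $g$ is independent of all other randomness in the circuit, $Z_g$ is conditionally independent of $(\mathbf{W}',X)$ given $Y_g$, with the same error parameters $a,b$; thus Corollary~\ref{cor: conditioned} (with $Q=\mathbf{W}'$), followed by the data-processing inequality for the deterministic map $Y_g=f(\mathrm{In}(g))$, gives
\[
I(X;Z_g\mid\mathbf{W}')\ \le\ c\,I(X;Y_g\mid\mathbf{W}')\ \le\ c\,I(X;\mathrm{In}(g)\mid\mathbf{W}').
\]
Writing $I(X;\mathrm{In}(g)\mid\mathbf{W}')=I(X;\mathbf{W}'\cup\mathrm{In}(g))-I(X;\mathbf{W}')$ and recalling $c=c_{a,b}=1-\bigl(\sqrt{b(1-a)}+\sqrt{a(1-b)}\bigr)^{2}\in[0,1]$, these combine to
\[
I(X;\mathbf{W})\ \le\ (1-c)\,I(X;\mathbf{W}')+c\,I(X;\mathbf{W}'\cup\mathrm{In}(g)).
\]

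Both $\mathbf{W}'$ and $\mathbf{W}'\cup\mathrm{In}(g)$ are families of wires of the circuit $G'=G\setminus\{g\}$, a legitimate circuit with the same inputs and one fewer gate (removing $g$ breaks no other gate, since $g$ feeds none). Let $S'$ be the path-sum $\sum c^{|P|}$ over paths $X\to\mathbf{W}'$, and $T$ the path-sum over paths from $X$ to those originating vertices of $\mathrm{In}(g)$ that are not already originating vertices of wires of $\mathbf{W}'$. The induction hypothesis gives $I(X;\mathbf{W}')\le S'$ and (the path-sum over $\mathbf{W}'\cup\mathrm{In}(g)$ in $G'$ splitting as $S'+T$) $I(X;\mathbf{W}'\cup\mathrm{In}(g))\le S'+T$, so, using $1-c\ge 0$,
\[
I(X;\mathbf{W})\ \le\ (1-c)S'+c(S'+T)\ =\ S'+cT.
\]
Finally I would identify $S'+cT$ inside the target sum over paths $X\to\mathbf{W}$ in $G$: such a path ends either at an originating vertex of $\mathbf{W}'$ (no path to $\mathbf{W}'$ uses the sink $g$, so these contribute exactly $S'$) or at $g$ itself; and every path ending at $g$ arises uniquely from a path to an originating vertex of $\mathrm{In}(g)$ by appending the last edge into $g$, which multiplies $c^{|P|}$ by one more factor of $c$. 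Restricting that last family to the originating vertices of $\mathrm{In}(g)$ not shared with $\mathbf{W}'$ only decreases the sum (as $c\ge 0$), so the target sum is at least $S'+cT\ge I(X;\mathbf{W})$, completing the induction.

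The crux is the combinatorial bookkeeping of path-sums through $\mathbf{W}'\cup\mathrm{In}(g)$: one must take care that an input wire of $g$ whose origin coincides with the origin of some wire already in $\mathbf{W}'$ is not double-counted; that fan-out or repeated input wires do not create spurious paths (this is exactly what the paper's convention of identifying a path to a wire with a path to its originating vertex buys us); and that appending the final step into $g$ contributes precisely the extra factor $c$ and not more. The remaining points are routine: that Corollary~\ref{cor: conditioned} genuinely applies with $Q=\mathbf{W}'$ --- i.e.\ independence of $g$'s noise from everything else in the circuit --- and that $c_{a,b}\in[0,1]$, so that the weights $1-c$ and $c$ in the averaging step are nonnegative.
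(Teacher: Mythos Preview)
Your proof is correct and follows essentially the same route as the paper: both peel off a ``last'' gate (you pick one of maximal depth, the paper the top of a topological numbering), apply the chain rule, Corollary~\ref{cor: conditioned}, and the DPI for the deterministic computation to reach the convex combination $(1-c)\,I(X;\mathbf{W}')+c\,I(X;\mathbf{W}'\cup\mathrm{In}(g))$, and then invoke the inductive hypothesis and match up path sums. One small bookkeeping point to tidy: if $g$ has several output wires lying in $\mathbf{W}$, your $\mathbf{W}'=\mathbf{W}\setminus\{Z_g\}$ should remove \emph{all} of them (as the paper does by collapsing them to a single $W^+$), so that $\mathbf{W}'$ genuinely lives in $G\setminus\{g\}$.
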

\begin{proof}
Throughout this proof, a collection of possibly more than $1$ binary random variables is denoted in bold.

We first establish a numbering system to keep track of dependencies within the circuit. Assign every input vertex the number $0$. Since $G$ is a directed acyclic graph, it is possible to number the gate vertices distinctly, from $1$ to the number of gates in $G$, in such a way that every directed edge points away from a lower-numbered vertex and into a higher-numbered one. 

Such a numbering also induces a numbering of every wire by the number of its originating vertex. Note that the numbering of wires is not necessarily distinct as multiple wires may originate from the same vertex and thus receive the same number.

The proof proceeds by induction on the highest-numbered wire in $\W$. For the base case: if the highest-numbered wire is $0$, then every wire comes directly from an input. If there is a wire from $X$, then $I(X;\W)=1$, and the sum is no less than $1$ since said wire induces a path with no gates. If there is no wire from $X$, then $I(X;\W)=0$.

Suppose that, for $n\ge 0$, the desired inequality holds for every collection of wires whose highest number is no more than $n$. Let $\W$ be a collection of wires with highest number $n+1$. The analysis proceeds by partitioning $\W=\W'\sqcup W^+$, where $\W'$ is the collection of wires numbered up to $n$, and $W^+$ the collection of wires numbered $n+1$. Since all the wires in $W^+$ originate from the same gate and carry the same information, they may be regarded, information-theoretically, as a single wire. 

By the chain rule one has
\begin{align}
    I(X;\W)=&\ I(X;\W')+I(X;W^+\mid \W').
\end{align}
The first part can be bounded by the inductive hypothesis, and the latter part by examining what happens in gate $n+1$.

Let $Y$ be the pre-noise output of that gate (i.e., $W^+$ is the noised version of $Y$), and let $\Y$ be the input. We will relate $W^+$ first to $Y$ and then $\Y$, which consists of lower-numbered wires where the inductive hypothesis is applicable. Note that while $W^+$, $Y$, and $\Y$ are information-theoretically distinct, they are graph-theoretically all identified with vertex $n+1$; a schematic diagram of the gate numbered $n+1$ is provided in Figure \ref{fig: diagram}.

%\fix{add schematic diagram for what happens at each gate}

\begin{figure}[h!]
    \centering
   \tikzset{every picture/.style={line width=0.75pt}} %set default line width to 0.75pt        

\begin{tikzpicture}[x=0.75pt,y=0.75pt,yscale=-1.4,xscale=1.4]
%uncomment if require: \path (0,300); %set diagram left start at 0, and has height of 300

%Straight Lines [id:da4081922758471872] 
\draw    (162.54,49.06) -- (246.09,87.24) ;
%Straight Lines [id:da4508685437108022] 
\draw    (162.54,97.84) -- (246.09,97.84) ;
%Straight Lines [id:da7445005901763617] 
\draw    (162.54,146.04) -- (246.09,106.52) ;
%Shape: Rectangle [id:dp5970345201214746] 
\draw   (246.09,77.98) -- (428.15,77.98) -- (428.15,116.54) -- (246.09,116.54) -- cycle ;
%Straight Lines [id:da722544838774597] 
\draw    (428.15,107.09) -- (500,127.34) ;
%Straight Lines [id:da6135844465907108] 
\draw    (428.15,87.81) -- (500,68.54) ;
%Shape: Brace [id:dp8098718342325899] 
\draw   (155.86,45.4) .. controls (151.19,45.33) and (148.82,47.62) .. (148.75,52.29) -- (148.19,88.79) .. controls (148.08,95.46) and (145.7,98.75) .. (141.03,98.68) .. controls (145.7,98.75) and (147.98,102.12) .. (147.88,108.79)(147.92,105.79) -- (147.31,145.29) .. controls (147.24,149.96) and (149.53,152.33) .. (154.2,152.4) ;

% Text Node
\draw (272.76,88.53) node [anchor=north west][inner sep=0.75pt]   [align=left] {$\Y \xrightarrow[]{\text{compute}} Y \xrightarrow[]{\text{noise}} W^+$};
% Text Node
\draw (110.09,92.14) node [anchor=north west][inner sep=0.75pt]   [align=left] {$\Y$};
% Text Node
\draw (300.61,58.7) node [anchor=north west][inner sep=0.75pt]   [align=left] {gate $n+1$};
% Text Node
\draw (470.56,77.74) node [anchor=north west][inner sep=0.75pt]   [align=left] {$W^+$};
% Text Node
\draw (469.87,126.96) node [anchor=north west][inner sep=0.75pt]   [align=left] {$W^+$};
% Text Node
\draw (439.64,60.28) node [anchor=north west][inner sep=0.75pt]   [align=left] {wire $n+1$};
% Text Node
\draw (439.07,100.17) node [anchor=north west][inner sep=0.75pt]   [align=left] {wire $n+1$};

\end{tikzpicture}
    \caption{Schematic diagram of gate numbered $n+1$.}
    \label{fig: diagram}
\end{figure}
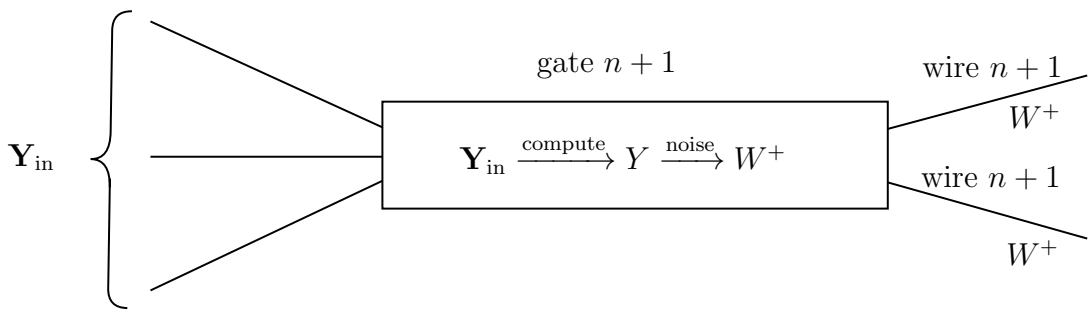

Since $X$ and every wire in $\W'$ is numbered less than $n+1$, they are conditionally independent of $W^+$ given $Y$, i.e., $(X,\W')\--Y\--W^+$ forms a Markov chain. By Corollary \ref{cor: conditioned},
\begin{align}
    I(X;W^+\mid \W')\le cI(X;Y\mid \W').
\end{align}
Furthermore, $(\W',X)\--\Y\--Y$ also forms a Markov chain, since $Y$ results from computing $\Y$, and every wire in $(\W',X)$ is numbered less than $n+1$. By the proof of Corollary \ref{cor: conditioned}, upon conditioning by $\W'$ being a fixed value the conditional random variables $X\--\Y\--Y$ still form a Markov chain, thus by the data processing inequality
\begin{align}
    I(X;Y\mid \W')\le&\ I(X;\Y\mid \W')\\
    =&\ I(X;\Y,\W')-I(X;\W').
\end{align}
We thus have
\begin{align}
    I(X;\W)\le&\ I(X;\W')+c(I(X;\Y,\W')-I(X;\W'))\\
    =&\ (1-c)I(X;\W')+cI(X;\Y,\W').
\end{align}
Since every wire in $\Y$ is input to gate $n+1$ and thus has a smaller number, the inductive hypothesis can be used to simply both terms, yielding
\begin{align}
    I(X;\W)\le&\ \left((1-c)\sum_{X\xrightarrow[]{P}\W'}+c\sum_{X\xrightarrow[]{P}\W',\Y}\right)c^{|P|}\\
    \le&\ \left((1-c)\sum_{X\xrightarrow[]{P}\W'}+c\sum_{X\xrightarrow[]{P}\W'}+c\sum_{X\xrightarrow[]{P}\Y}\right)c^{|P|}\\
    =&\ \left(\sum_{X\xrightarrow[]{P}\W'}+\sum_{X\xrightarrow[]{P}W^+}\right)c^{|P|}\\
    =&\ \sum_{X\xrightarrow[]{P}\W}c^{|P|},
\end{align}
noting that changing the destination from $\Y$ to $W^+$ increases the path length by $1$.
\end{proof}

Let $f$ be a function that depends on $n$ inputs. Suppose $C$ is some circuit, composed of $(a,b)$-noisy $k$-input gates, that computes $f$ with a probability of error no greater than $\delta$.

In the best case where the components are noiseless, the depth of a circuit grows necessarily with the number of inputs. Since a circuit of depth $d$ contains at most $k^d$ input vertices (achieved by a complete $k$-ary tree), a function that depends on $n$ variable inputs cannot be computed by any circuit of depth $d<\log_kn$. 

In the case of noisy gates, the depth bound given by \cite{ES} shows that the minimum depth of $C$ must be greater than the noiseless minimum $\log_kn$ by a multiplicative factor depending on the number of inputs per gate and the amount of noise.

The proof of the circuit depth bound relies on an observation about Lemma \ref{lem: graphspdi}: the bound it gives decreases exponentially along each path, and increases with each additional path. Thus, under certain conditions, one might expect a deeper circuit to provide enough additional paths to offset the decay due to depth. This heuristic argument can be made precise by the following bounds on minimum circuit depth and maximum component noise, which is a version of Theorem 2 in \cite{ES}.

\begin{theorem}\label{thm: depthbound}
    Let $f$ be a function that depends on $n$ inputs. Suppose $C$ is a circuit of depth $d$, composed of gates with at most $k$ inputs, with each gate being independently $(a,b)$-noisy. 
    
    Suppose $C$ computes $f$ $\delta$-reliably, i.e., with probability of error no more than $\delta$. Define $\Delta:=1-h(\delta)$, which can be seen as a measure of relevance between $C$'s output and the correct result. (Note that relevance is a better measure for our purpose than the probability of error--a circuit that errs with probability $1$ is easily corrected by inverting the result!)

    Let $c:=c_{a,b}$. If
    \begin{itemize}
        \item $ck> 1$, then $d\ge\frac{\log n\Delta}{\log ck}$.
        \item $ck\le 1$, then $\Delta\le\frac{1}{n}$.
    \end{itemize}
\end{theorem}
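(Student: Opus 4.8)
The plan is to apply Lemma~\ref{lem: graphspdi} to the output wire of $C$ and combine it with a lower bound on $I(X;\text{output})$ coming from $\delta$-reliability, where $X$ is a well-chosen input.

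\textbf{Step 1: Reduce to a single relevant input and lower-bound the mutual information.} Since $f$ depends on all $n$ of its inputs, for each coordinate $i$ there is a setting of the other $n-1$ inputs making $f$ a nonconstant (hence, being Boolean, a bijective-up-to-negation) function of the $i$th input. Fix such a coordinate, feed it a uniform bit $X$, and fix the remaining inputs to the corresponding constants; call the circuit's (noisy) output wire $Z$. Then the correct answer $f(X,\text{constants})$ equals $X$ or $1-X$, so $H(f\mid X)=0$ and $I(X;f)=H(X)=1$. Because $C$ computes $f$ with error probability at most $\delta$, Fano-type reasoning gives $H(f\mid Z)\le h(\delta)$ (the error indicator has entropy at most $h(\delta)$ and determines $f$ from $Z$ in the binary case), whence
\begin{align}
I(X;Z)\ge I(f;Z)=H(f)-H(f\mid Z)\ge 1-h(\delta)=\Delta,
\end{align}
using the data-processing inequality along $X\to Z\to(\text{decoded }f)$ appropriately, or more directly $I(X;Z)\ge 1-H(X\mid Z)\ge 1-h(\delta)=\Delta$ since from $Z$ one recovers $X$ with error $\le\delta$.

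\textbf{Step 2: Upper-bound the mutual information via the path sum.} Apply Lemma~\ref{lem: graphspdi} with $\mathbf W=\{Z\}$:
\begin{align}
\Delta\le I(X;Z)\le\sum_{X\xrightarrow{P}Z}c^{|P|}.
\end{align}
Every path from $X$ to $Z$ has length $|P|\ge 1$ and at most $d$ (the depth), and the number of paths of a given length $\ell$ is at most $k^{\ell-1}\le k^\ell$, since each gate has at most $k$ inputs so there are at most $k^{\ell}$ length-$\ell$ walks into the output and we may crudely bound. Hence
\begin{align}
\Delta\le\sum_{\ell=1}^{d}k^{\ell}c^{\ell}=\sum_{\ell=1}^{d}(ck)^{\ell}.
\end{align}

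\textbf{Step 3: Split on whether $ck>1$ or $ck\le 1$.} If $ck\le 1$, then $(ck)^\ell\le 1$ for each $\ell$, and a finer count (number of paths of length exactly $\ell$ into a single output is at most $k^{\ell-1}$, while the $n$ distinct leaves force... ) — more carefully, since the paths of length $\ell$ end at a gate with $\le k^{\ell}$ ancestral leaves but only one of them is $X$, one gets $\Delta\le\sum_\ell(ck)^\ell\cdot(\text{fraction})$; the cleanest route is to note directly $\sum_{X\xrightarrow{P}Z}c^{|P|}\le\frac1n\sum_{\text{leaves }v}\sum_{v\xrightarrow{P}Z}(ck)^{|P|}\cdot\frac{n}{k^{|P|}}$ — I will instead argue the $ck\le1$ case by replacing $c^{|P|}$ with $(ck)^{|P|}/k^{|P|}\le 1/k^{|P|}$ and summing $1/k^{|P|}$ over all paths, which telescopes to at most $1/n$ because the $n$ leaves each contribute a total weight $1$ split among $k^{|P|}$ paths. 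If $ck>1$, bound the geometric sum by its last term times a constant: $\sum_{\ell=1}^d(ck)^\ell<\frac{(ck)^{d+1}}{ck-1}\le(ck)^{d}\cdot\frac{ck}{ck-1}$; combined with the sharper path count $\le k^{\ell-1}c^{\ell}$ one gets $\Delta\le (ck)^d/n$ up to the harmless constant, i.e. $n\Delta\le(ck)^d$, giving $d\ge\log(n\Delta)/\log(ck)$ after taking logarithms.

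\textbf{Main obstacle.} The delicate point is getting the path-counting sharp enough to produce exactly $n\Delta\le(ck)^d$ (with the $n$ and no stray constant) rather than a weaker inequality — the naive bound ``$\le k^\ell$ paths of length $\ell$'' loses the factor $n$. The fix is to exploit that the $n$ distinct variable-leaves of a depth-$d$, fan-in-$k$ circuit, when we sum $k^{-|P|}$ over \emph{all} root-to-leaf paths, give total mass at most $1$ per leaf, hence summing $c^{|P|}=(ck)^{|P|}k^{-|P|}$ over paths from the single leaf $X$ is at most $\frac{1}{n}\sum_{\text{all leaves}}\sum_{P}(ck)^{|P|}k^{-|P|}\le\frac1n\sum_{\ell\le d}(ck)^\ell\cdot(\text{total }k^{-\ell}\text{-mass at level }\ell)$, and the level-$\ell$ mass is $\le 1$; I will need to set up this averaging over leaves carefully so that in the $ck\le1$ case it collapses to $\Delta\le 1/n$ and in the $ck>1$ case to $\Delta\le(ck)^d/n$.
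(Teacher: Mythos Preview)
Your Steps 1 and 2 are correct and match the paper's argument. The gap is in Step 3 and your ``Main obstacle'' paragraph: the averaging device you reach for does not work. You are claiming, in effect, that
\[
\sum_{X\xrightarrow{P}Z}c^{|P|}\;\le\;\frac{1}{n}\sum_{\text{leaves }v}\;\sum_{v\xrightarrow{P}Z}c^{|P|},
\]
i.e.\ that the single leaf $X$ contributes at most the average over all leaves. There is no reason for this: one input vertex may have many more paths to the output than another, so a bound on a single input cannot manufacture the factor $n$. (The same objection applies to the ``sharper path count $\le k^{\ell-1}c^{\ell}$'' aside; $k^{\ell-1}$ bounds the total number of length-$\ell$ paths into the output, not the number starting from $X$ specifically.)

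The fix --- and what the paper does --- is to \emph{sum} rather than average. Run your Steps~1--2 for \emph{every} input $x_i$ (each with its own setting of the remaining inputs to constants), obtaining $\Delta\le\sum_{x_i\xrightarrow{P}Z}c^{|P|}$ for each $i$; then add these $n$ inequalities:
\[
n\Delta\;\le\;\sum_{i=1}^{n}\sum_{x_i\xrightarrow{P}Z}c^{|P|}\;\le\;\sum_{P\in C}c^{|P|},
\]
the right-hand sum running over all input-to-output paths in $C$. A single clean graph-theoretic lemma now finishes both cases at once: for any $r\ge 1/k$ one has $\sum_{P\in C}r^{|P|}\le(rk)^d$ (this is the paper's Lemma~\ref{lem: treesum}, proved by reducing to the complete $k$-ary tree of depth $d$). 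If $ck>1$, take $r=c$ to get $n\Delta\le(ck)^d$ and hence $d\ge\log(n\Delta)/\log(ck)$; if $ck\le 1$, use $c^{|P|}\le(1/k)^{|P|}$ and take $r=1/k$ to get $n\Delta\le 1$. No geometric-series manipulation or ``fraction of leaves'' reasoning is needed.
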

\begin{proof}
This proof uses the same essential ideas as the proof from \cite{ES}, but with a notable difference: a slight generalization of Lemma 3 from \cite{ES}, a purely graph-theoretic result, allows for the second part of the theorem to be proven much more readily, in parallel to the first part.

    Let $x_1,...,x_n$ be the $n$ inputs of $f$. Since $f$ depends on all of them, for every input $x_i$ there exists an assignment of the other inputs such that $f$, viewed as a function of $x_i$, is non-constant. Denote this restricted function by $f_i(x_i)$, which must equal $x_i$ or $1-x_i$, which are the only non-constant functions of $x_i$. Let $C_i$ be the circuit with the same restriction on the inputs other than $x_i$.

    Suppose we feed into $x_i$ a binary random variable with parameter $1/2$, denoted $X$. By Fano's inequality,
    \begin{align}
        H(X\mid C_i(X))\le&\ h(\P(C_i(X)\ne f_i(X)))+\P(C_i(X)\ne f_i(X))\log_2(2-1)\\
        \le&\ h(\delta),
    \end{align}
    thus
    \begin{align}
        I(X;C_i(X))=&\ H(X)-H(X\mid C_i(X))\\
        \ge&\ 1-h(\delta)=\Delta.
    \end{align}
    On the other hand, applying Lemma \ref{lem: graphspdi} with $G=C_i$ and $\W=C_i(X)$ yields
    \begin{align}
        I(X;C_i(X))\le\sum_{X\xrightarrow[]{P}C_i(X)}c^{|P|},
    \end{align}
    which combined with the previous inequality yields
    \begin{align}
        \Delta\le\sum_{X\xrightarrow[]{P}C_i(X)}c^{|P|}.
    \end{align}
    Summing over the $n$ input variables yields
    \begin{align}\label{eqn: sumbound}
        n\Delta\le\sum_{P\in C}c^{|P|},
    \end{align}
    where the sum is taken over all paths from an input vertex to the output.

    It is useful to find an upper bound for the sum; we prove a bound more general than Lemma 3 from \cite{ES}, which leads to a simpler treatment of the second part of the theorem. 
    
    \begin{lemma}\label{lem: treesum}
        Let $C$ be a circuit of depth $d$ composed of gates with $k$ or fewer inputs, and let $r\ge\frac{1}{k}$ be some constant. Then
        \begin{align}
            \sum_{P\in C}r^{|P|}\le (rk)^d.
        \end{align}
    \end{lemma}
\begin{proof}
    It suffices to show that among all circuits of depth $d$, the sum is maximized for the complete $k$-ary tree of depth $d$, which achieves $r^dk^d$.

    If $C$ is not a tree, it can be modified into a tree by duplicating every gate with multiple outputs so that each copy has only one output; this preserves the number and length of paths. The resulting tree can be turned into a complete tree of depth $d$ by completing non-leaf vertices with fewer than $k$ children, and by adding $k$ children to leaves of depth less than $d$. The former only introduces new paths, while the latter lengthens a path by $1$ and produces $k$ copies of it, thus scaling its contribution to the sum by a factor of $rk\ge1$.
\end{proof}

For the first part of the theorem: since $c>\frac{1}{k}$, take $r=c$ in Lemma \ref{lem: treesum}, which applied to inequality \ref{eqn: sumbound} yields
\begin{align}
    n\Delta\le\sum_{P\in C}c^{|P|}\le (ck)^d,
\end{align}
thus $d\ge \frac{\log n\Delta}{\log ck}$ as desired.
    
For the second part of the theorem: since $c\le\frac{1}{k}$, it holds that
\begin{align}
    n\Delta\le \sum_{P\in C}c^{|P|}\le \sum_{P\in C}\left(\frac{1}{k}\right)^{|P|}\le 1
\end{align}
upon taking $r=\frac{1}{k}$ in Lemma \ref{lem: treesum}. Thus $\Delta\le\frac{1}{n}$ as desired.
\end{proof}

There is an immediate takeaway from each part of this theorem. Firstly, note that compared to the minimum depth of $\frac{\log n}{\log k}$ for noiseless circuits, the depth lower bound obtained in Theorem \ref{thm: depthbound} is greater by a constant factor: we have
\begin{align}
    \left(\frac{\log n\Delta}{\log ck}\right)/\left(\frac{\log n}{\log k}\right)=&\ \frac{(\log n+\log\Delta)\log k}{(\log c+\log k)\log n}\\
    \approx&\ \frac{1}{1+\log c/\log k}
\end{align}
since $\log\Delta\approx0$ for high levels of reliability.

Secondly, one could use the critical threshold to bound the level of noise beyond which reliable computation is impossible. Observing that $c_{a,b}\le1-a-b$ for $a,b\le\frac{1}{2}$, reliable computation with $c_{a,b}> 1/k$ immediately requires that 
\begin{align}
    a+b<1-\frac{1}{k}.
\end{align} 
In the case that the noise is symmetric, i.e., $a=b$, one can obtain a more precise bound by noting $c_{a,a}=(1-2a)^2$, which forces 
\begin{align}
    a<\frac{1-1/\sqrt{k}}{2}.
\end{align}

\section{Noise control via majority gates} 

\subsection{Von Neumann's 3-Majority Gate}

Having introduced the noisy circuit model of computation, von Neumann asked whether it is possible for noisy circuits to simulate noiseless ones with reasonable accuracy. Observing that animal brains, despite being built from noisy components, are capable of reliably performing long chains of computation, von Neumann had guessed that such noise-controlling constructions must be possible. Indeed, von Neumann was able to demonstrate that a noisy circuit can simulate any noiseless circuit with reasonable accuracy at the cost of greater depth, provided that the error probability of each component stays below a critical threshold.

Von Neumann's construction relies on the 3-majority gate, which takes in $3$ binary inputs and outputs the value that appears twice or more. As a noise-controlling mechanism, it could be seen as a majority vote relying on the premise that it is more likely for a single result to be erroneous compared to two or more results out of three. Aside from being a noise-controlling mechanism, remarkably, the 3-majority gate, when combined with a NOT gate, also suffices for carrying out computation. Let $m(x,y,z)$ denote the majority of the input binary values $x,y,z$. Then one can express
\begin{align}
    \text{OR}(x,y)=&\ m(x,y,1),\\
    \text{AND}(x,y)=&\ m(x,y,0).
\end{align}
However, computation is not the focus of this essay, which instead aims to quantify the behavior of the 3-majority gate and establish necessary conditions for its noise-controlling purpose.

In von Neumann's analysis, noise at the gate level is assumed to be symmetric: bit flips from $0$ to $1$ and vice versa are equally likely. Thus, one can quantify the noise by a single parameter $p<1/2$, the probability that a bit flip occurs. One might imagine constructing $3$ independent and identical copies of said circuit, and feeding their outputs to a $3$-majority gate, hoping that the output of the gate errs with probability less than $p$. This is only possible if the gate itself errs with probability $\epsilon<p$, since the whole circuit cannot be more reliable than its final component.

The 3-majority gate construction can fail in two ways: the inputs to the gate are bad and the gate functions normally, or the inputs are good but the gate fails. Since the inputs err independently, the probability that $2$ or more err is 
\begin{align}
    \Theta(p):=p^2(1-p)\binom{3}{2}+p^3\binom{3}{3}=3p^2-2p^3.
\end{align}
The overall probability of failure is thus
\begin{align}
    f_{\epsilon}(p):=&\ \Theta(p)(1-\epsilon)+(1-\Theta(p))\epsilon\\
    =&\ \epsilon+(1-2\epsilon)(3p^2-2p^3).
\end{align}
Note that if the inputs each fail with probabilities less than $p$ or the gate fails with probability less than $\epsilon$, $f_{\epsilon}(p)$ is an upper bound for the probability of overall error. This holds due to the monotonicity of $\Theta(p)$ and that $1-\epsilon>\epsilon$.

\begin{figure}[h!]
    \centering
    \includegraphics[width=8cm]{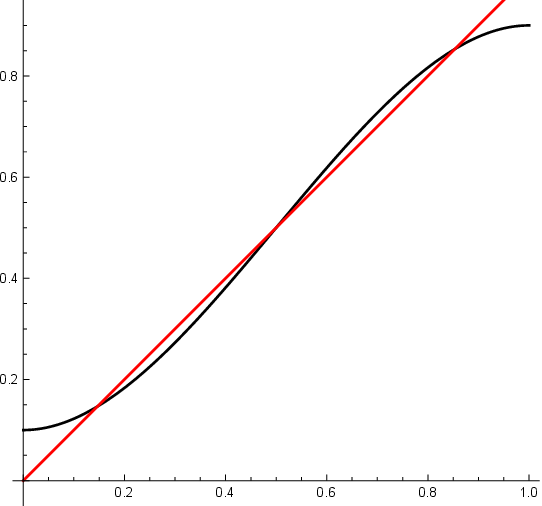}
    \caption{Plot of $f_{\epsilon}(p)$ over $p\in [0,1]$, with $\epsilon=0.1$. A plot of $p$ in red is provided for comparison. For $p\in(\approx 0.15, 0.5)$, $f_{\epsilon}(p)<p$ and noise reduction is possible.}
    \label{fig: errorRegion}
\end{figure}

The noise-controlling construction of triplicating the noisy circuit and feeding their outputs into a 3-majority gate works only if the probability of an erroneous gate output is less than the probability of an erroneous sample, i.e., $f_{\epsilon}(p)<p$. Figure \ref{fig: errorRegion} plots the two functions for a fixed $\epsilon=0.1$. In the interest of bounding the maximum allowable noise of individual components, von Neumann shows that
\begin{theorem}\label{thm: vN}
    For $\epsilon\ge\frac{1}{6}$, there does not exist any $p<\frac{1}{2}$ for which $f_{\epsilon}(p)<p$, i.e., that the triplication method of noise reduction works.
\end{theorem}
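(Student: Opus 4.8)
The plan is to show the contrapositive-flavored statement directly: for every $\epsilon \ge \tfrac16$ and every $p \in [0,\tfrac12)$ one has $f_\epsilon(p) \ge p$. Set $g(p) := f_\epsilon(p) - p = \epsilon + (1-2\epsilon)(3p^2 - 2p^3) - p$, a cubic in $p$ with negative leading coefficient (assuming $\epsilon < \tfrac12$; the case $\epsilon \ge \tfrac12$ is even easier and I would dispatch it separately). The goal is to prove $g \ge 0$ on $[0,\tfrac12)$.

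The key observation, which I would establish first, is that $f_\epsilon(\tfrac12) = \tfrac12$ \emph{identically in} $\epsilon$: plugging $p = \tfrac12$ gives $3p^2 - 2p^3 = \tfrac12$, so $f_\epsilon(\tfrac12) = \epsilon + (1-2\epsilon)\tfrac12 = \tfrac12$. Hence $g(\tfrac12) = 0$ and $(p - \tfrac12)$ divides $g$. Carrying out the polynomial division, I expect to obtain the factorization
\begin{align}
    g(p) = 2\left(\tfrac12 - p\right)\Big((1-2\epsilon)\,p(p-1) + \epsilon\Big),
\end{align}
which one can verify by expanding. Write $q(p) := (1-2\epsilon)\,p(p-1) + \epsilon$ for the quadratic factor. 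On $[0,\tfrac12)$ the factor $\tfrac12 - p$ is strictly positive, so the sign of $g$ is that of $q$, and it remains to show $q(p) \ge 0$ for all $p \in [0,\tfrac12)$ (indeed all $p \in [0,1]$).

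To bound $q$, I would use that $p(p-1)$ attains its minimum $-\tfrac14$ on $[0,1]$ at $p = \tfrac12$, together with $1 - 2\epsilon \ge 0$, to get $q(p) \ge -\tfrac{1-2\epsilon}{4} + \epsilon = \tfrac{6\epsilon - 1}{4}$, which is $\ge 0$ precisely when $\epsilon \ge \tfrac16$. This yields $g(p) \ge 0$ on $[0,\tfrac12)$, i.e.\ $f_\epsilon(p) \ge p$ there, completing the argument. The only real obstacle is the initial step — recognizing that $p = \tfrac12$ is always a root of $g$ — after which the factorization collapses the problem to a one-line estimate on a quadratic; I would also remark that at $\epsilon = \tfrac16$ the root at $p=\tfrac12$ is in fact double (since $g'(\tfrac12) = \tfrac12 - 3\epsilon$), which explains why $\tfrac16$ is the sharp threshold.
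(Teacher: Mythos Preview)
Your proposal is correct and follows essentially the same route as the paper: both define the difference $f_\epsilon(p)-p$, observe that $p=\tfrac12$ is always a root, factor it out, and reduce to showing the remaining quadratic $(1-2\epsilon)p^2-(1-2\epsilon)p+\epsilon$ is nonnegative. The only cosmetic difference is that the paper checks nonnegativity via the discriminant $(1-2\epsilon)^2-4(1-2\epsilon)\epsilon\le 0\iff \epsilon\ge\tfrac16$, whereas you compute the minimum of the quadratic directly via $\min_{[0,1]}p(p-1)=-\tfrac14$; these are equivalent one-line computations.
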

\begin{proof}
    Consider the difference in error probability 
    \begin{align}
        D_{\epsilon}(p):=&\ f_{\epsilon}(p)-p\\
        =&\ \epsilon+(1-2\epsilon)(3p^2-2p^3)-p.
    \end{align}

The function $D_{\epsilon}(p)$ is a cubic with a root at $p=1/2$. Since it is positive as $p\to-\infty$, having a noise reduction region where $D_{\epsilon}(p)$ is negative for some $p<1/2$ requires the existence of another real root. 

We can factor
\begin{align}
    \frac{D_{\epsilon}(p)}{(1/2-p)}=&\ (1-2\epsilon)p^2-(1-2\epsilon)p+\epsilon,
\end{align}
which has a real root only when the discriminant $(1-2\epsilon)^2-4(1-2\epsilon)\epsilon>0$, which is equivalent to $\epsilon<1/6$. (Note that using the discriminant simplifies von Neumann's analysis by avoiding the explicit computation of the roots.)
\end{proof}

\subsection{Higher-order majority gates}

Since majority gates rely on a ``popular vote" to decide on the correct input, it would be sensible to expect a larger number of inputs to yield a more accurate vote and thus an output with lower probability of being erroneous.

The $3$-majority gate introduced by von Neumann naturally generalizes to a $(2n+1)$-majority gate, which takes $2n+1$ binary inputs and outputs the value that occur at least $n+1$ times, i.e., more than half of the times. It is possible to build a similar noise-controlling mechanism where one starts off with some circuit with error probability $p$, construct $2n+1$ identical and independent copies of it, and feed their outputs to a $(2n+1)$-majority gate.

In this section, we extend von Neumann's noise threshold bound to majority gates of higher order; all results contained in this section are original to the author's knowledge. We begin by proving a direct generalization of Theorem \ref{thm: vN}.
\begin{theorem}\label{thm: generalvN}
     Let $2n+1$ binary results be generated in such a way that each each errs independently with probability $p<1/2$. Suppose that they are fed into a $(2n+1)$-majority gate that errs independently with probability $\epsilon$. The critical threshold for the gate noise, beyond which no margin of reduction is possible for any $p$, is 
     \begin{align}
         \epsilon<\frac{1-1/C(n)}{2}
     \end{align}
     where 
     \begin{align}
         C(n)=2^{-2n}(2n+1)\binom{2n}{n}.
     \end{align}
\end{theorem}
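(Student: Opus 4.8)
The plan is to mimic exactly the structure of the proof of Theorem~\ref{thm: vN}, replacing the cubic $D_\epsilon(p)$ with the analogous function for the $(2n+1)$-majority gate and analyzing when it admits a root in $(0,1/2)$. First I would write $\Theta_n(p)$ for the probability that at least $n+1$ of the $2n+1$ independent samples err, namely $\Theta_n(p)=\sum_{j=n+1}^{2n+1}\binom{2n+1}{j}p^j(1-p)^{2n+1-j}$, and then set $f_\epsilon(p)=\epsilon+(1-2\epsilon)\Theta_n(p)$ and $D_\epsilon(p)=f_\epsilon(p)-p$, just as in the $n=1$ case. Noise reduction for some $p<1/2$ requires $D_\epsilon(p)<0$ somewhere on $(0,1/2)$.

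The key structural facts I would establish about $\Theta_n$ are: (i) $\Theta_n(1/2)=1/2$ by the symmetry $\Theta_n(p)+\Theta_n(1-p)=1$, so $p=1/2$ is always a root of $D_\epsilon$; (ii) $\Theta_n(0)=0$ and $\Theta_n$ is increasing and convex on $[0,1/2]$ — convexity on the lower half is the crucial monotone-shape input that makes the threshold clean; and (iii) the derivative at the endpoint, $\Theta_n'(1/2)=C(n)=2^{-2n}(2n+1)\binom{2n}{n}$. Fact (iii) is a standard identity: differentiating the binomial tail telescopes to $\Theta_n'(p)=(2n+1)\binom{2n}{n}p^n(1-p)^n$, and evaluating at $p=1/2$ gives $C(n)$. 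Given convexity of $\Theta_n$ and hence of $D_\epsilon$ on $[0,1/2]$, together with $D_\epsilon(0)=\epsilon>0$ and $D_\epsilon(1/2)=0$, the function $D_\epsilon$ has a root in $(0,1/2)$ (equivalently, dips below zero before $p=1/2$) if and only if it is decreasing as it arrives at $1/2$, i.e. $D_\epsilon'(1/2)>0$. Computing $D_\epsilon'(1/2)=(1-2\epsilon)\Theta_n'(1/2)-1=(1-2\epsilon)C(n)-1$, this is positive exactly when $\epsilon<\tfrac12\bigl(1-1/C(n)\bigr)$, which is the claimed threshold.

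I expect the main obstacle to be rigorously justifying the shape of $\Theta_n$ on $[0,1/2]$ — specifically convexity — and nailing down the "if and only if" between $D_\epsilon'(1/2)>0$ and the existence of a noise-reduction region, so that one genuinely gets a sharp threshold rather than just one direction. Convexity of $\Theta_n$ on $[0,1/2]$ follows because $\Theta_n'(p)=(2n+1)\binom{2n}{n}p^n(1-p)^n$ is increasing on $[0,1/2]$ (each factor $p^n$ and $(1-p)^n$ behaves suitably, or just note $p(1-p)$ is increasing there), so $\Theta_n''\ge 0$ on that interval; then $D_\epsilon''=(1-2\epsilon)\Theta_n''\ge 0$ since $\epsilon<1/2$. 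A convex function on $[0,1/2]$ that is positive at $0$ and zero at $1/2$ is negative somewhere in the interior precisely when its right-hand endpoint slope is positive — if the endpoint slope were $\le 0$, convexity would force $D_\epsilon\ge D_\epsilon(1/2)=0$ throughout, killing any reduction region; if it is $>0$, continuity gives values below $0$ just left of $1/2$. A minor secondary point is the boundary case $\epsilon=\tfrac12(1-1/C(n))$, where $D_\epsilon'(1/2)=0$ and convexity still gives $D_\epsilon\ge0$, so equality is correctly excluded. As a sanity check, $n=1$ gives $C(1)=2^{-2}\cdot 3\cdot 2=3/2$, and the threshold $\tfrac12(1-2/3)=1/6$ recovers Theorem~\ref{thm: vN}.
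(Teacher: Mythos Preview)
Your proposal is correct and follows the same overall strategy as the paper: both arguments pivot on the convexity of $\Theta_n$ on $[0,1/2]$ and the value of $\Theta_n'(1/2)$, reducing the existence of a noise-reduction region to the single inequality $(1-2\epsilon)\,\Theta_n'(1/2)>1$. The paper phrases this geometrically --- comparing $\Theta_n$ to the line $p\mapsto \tfrac12-\tfrac{1/2-p}{1-2\epsilon}$ through $(1/2,1/2)$ --- while you work directly with $D_\epsilon$; these are the same argument in different clothing. The one substantive difference is in computing $\Theta_n'(1/2)$: the paper differentiates the binomial tail term by term at $p=1/2$, obtains the sum $2^{-2n}\sum_{k=n+1}^{2n+1}\binom{2n+1}{k}(2k-2n-1)$, and then invokes a separate combinatorial identity (proved in an appendix lemma) to collapse it to $(2n+1)\binom{2n}{n}$. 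Your telescoping identity $\Theta_n'(p)=(2n+1)\binom{2n}{n}\,p^n(1-p)^n$ bypasses that lemma entirely and yields the closed form at once --- a genuine simplification, and one that also makes the convexity of $\Theta_n$ on $[0,1/2]$ transparent (since $p(1-p)$ is increasing there), whereas the paper simply asserts convexity without justification.
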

\begin{proof}
    In parallel to von Neumann's analysis, we obtain the probability that more than $n+1$ of the $2n+1$ inputs are erroneous, which is given by 
    \begin{align}
        \Theta_n(p):=\sum_{k=n+1}^{2n+1}\binom{2n+1}{k}p^k(1-p)^{2n+1-k};
    \end{align}
thus, the probability of the gate outputting an erroneous result is
\begin{align}
    \Theta_n(p)(1-\epsilon)+(1-\Theta_n(p))\epsilon=&\ \epsilon+(1-2\epsilon)\Theta_n(p).
\end{align}
Setting this lower than $p$, one has
\begin{align}\label{eqn: threshold}
    \Theta_n(p)<&\ \frac{1}{2}-\frac{1/2-p}{1-2\epsilon}.
\end{align}

\begin{figure}[h!]
    \centering
    \includegraphics[width=10cm]{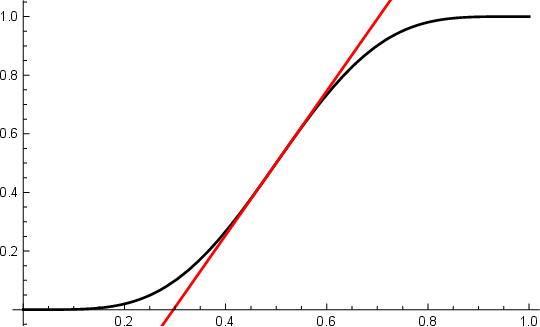}
    \caption{Plot of $\Theta_4(p)$ over $p\in[0,1]$, with the tangent at $1/2$ in red.}
    \label{fig: errorSlope}
\end{figure}

The right hand of \eqref{eqn: threshold}, as a function of $p$, defines a line with slope $\frac{1}{1-2\epsilon}$ that intersects with $\Theta_n(p)$ at $(1/2,1/2)$. Since $\Theta_n(p)$ is convex on $(0,1/2)$, a tangent line (or any line of greater slope) at $(1/2,1/2)$ does not meet $\Theta_n(p)$ in the interval; on the other hand, any line that has slope smaller than the tangent line but greater than $1$ is guaranteed an intersection within the interval. Figure \ref{fig: errorSlope} plots $\Theta_n(p)$ with its tangent line for $n=4$. 

The slope being greater than $1$ corresponds to $\epsilon>0$, so the threshold of interest is the derivative of $\Theta_n(p)$ at $1/2$. The derivative of each summand in $\Theta_n(p)$ is
\begin{align}
    &\ \binom{2n+1}{k}\left(kp^{k-1}(1-p)^{2n+1-k}-(2n+1-k)p^k(1-p)^{2n-k}\right)\bigg|_{p=1/2}\\
    =&\ 2^{-2n}\binom{2n+1}{k}(2k-2n-1),
\end{align}
thus
\begin{align}
    \Theta_n'\left(\frac{1}{2}\right)=&\ 2^{-2n}\sum_{k=n+1}^{2n+1}\binom{2n+1}{k}(2k-2n-1),
\end{align}
which can be shown to equal $C(n)$, see Lemma \ref{lem: combosum}.

Setting $\frac{1}{1-2\epsilon}$, the slope of the line in \eqref{eqn: threshold} less than $C(n)$ yields the desired inequality.
\end{proof}
Note that in the case of von Neumann's $3$-majority gate ($n=1$), Theorem \ref{thm: generalvN} produces the same threshold of $\epsilon<1/6$. Furthermore, one can show (see Lemma \ref{lem: asymptotic}) that for large $n$,
\begin{align}
    C(n)=\frac{2}{\sqrt{\pi}}\sqrt{n}+O\left(\frac{1}{\sqrt{n}}\right),
\end{align}
so the noise threshold we obtained is asymptotically
\begin{align}
    \epsilon\sim\frac{1-\sqrt{\frac{\pi}{4n}}}{2},
\end{align}
which is comparable to the threshold for $k$-input gates obtained by \cite{ES} by setting $k=2n+1$, which is asymptotically
\begin{align}
    \epsilon\sim\frac{1-\sqrt{\frac{1}{2n}}}{2}.
\end{align}
However, these asymptotics are not directly comparable because our result is specific to the majority gate and holds only on the level of one gate, while the results of \cite{ES} apply to arbitrary gates in an entire circuit.

Aside from error probability, another parameter of the majority gate is the number of inputs. Although \cite{ES} has shown that using gates with a large number of inputs reduces depth, they nevertheless introduce a great number of components, which is costly in both material and complexity. We give a sufficient bound for the number of inputs to achieve a specified level of noise reduction.

First is it useful to bound the probability $\Theta_n(p)$ that the inputs are bad, i.e., more than half of the $2n+1$ inputs are erroneous.

%\fix{maybe add plot showing how tight the bound is. Maybe also prove a lower bound that propagates to the next result}

\begin{lemma}\label{lem: thetabound}
        For all $n\ge1$ it holds that
        \begin{align}
            \Theta_n(p)\le 2^{2n}(p(1-p))^n.
        \end{align}
    \end{lemma}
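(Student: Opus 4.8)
The plan is to bound each summand of $\Theta_n(p)$ separately by $\binom{2n+1}{k}(p(1-p))^n$ and then sum. The key elementary observation is that on the range of summation, $k\ge n+1$, the power of $p$ strictly exceeds $n$ while the power of $1-p$ is at most $n$; together with the standing hypothesis $p<1/2$ this is exactly what forces $p^k(1-p)^{2n+1-k}\le (p(1-p))^n$ term by term.

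First I would fix $k$ with $n+1\le k\le 2n+1$ and set $j:=2n+1-k$, so that $0\le j\le n$. Factoring out $(p(1-p))^n$,
\begin{align*}
p^k(1-p)^{j}=(p(1-p))^n\cdot p^{\,k-n}(1-p)^{\,j-n},
\end{align*}
and since $k-n=(n-j)+1\ge 1$ while $j-n=-(n-j)\le 0$, the correction factor equals $p\cdot\bigl(p/(1-p)\bigr)^{n-j}$. Because $p<1/2$ we have $0<p/(1-p)<1$ and $0\le p<1$, so this factor is at most $1$. Hence $p^k(1-p)^{2n+1-k}\le (p(1-p))^n$ for every $k$ in the sum.

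Then I would sum over $k$ to get $\Theta_n(p)\le (p(1-p))^n\sum_{k=n+1}^{2n+1}\binom{2n+1}{k}$, and evaluate the binomial sum by symmetry: since $\binom{2n+1}{k}=\binom{2n+1}{2n+1-k}$ and $2n+1$ is odd, the partial sum over $k=n+1,\dots,2n+1$ equals the partial sum over $k=0,\dots,n$, so each is half of $\sum_{k=0}^{2n+1}\binom{2n+1}{k}=2^{2n+1}$, namely $2^{2n}$. This yields $\Theta_n(p)\le 2^{2n}(p(1-p))^n$, as claimed. I do not anticipate a real obstacle here; the only point requiring care is that the bound genuinely relies on $p<1/2$ (for $p$ near $1$ the right-hand side tends to $0$ whereas $\Theta_n(p)\to 1$), which is consistent with the assumption in force throughout this section.
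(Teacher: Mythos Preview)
Your proof is correct and is genuinely different from the paper's. The paper proceeds via a Chernoff--Cram\'er bound on the binomial tail: it first shows (Lemma~\ref{lem: chernoff}) that $\P(B_{N,p}\ge t)\le N^N(p/t)^t((1-p)/(N-t))^{N-t}$, specializes to $N=2n+1$, $t=n+1$, and then spends a separate computation simplifying the resulting constant $\frac{(2n+1)^{2n+1}}{n^n(n+1)^{n+1}}$ to $2^{2n+1}$ via the inequality $1+x\le e^x$, before finally using $2p\le 1$ to drop one power of $p$. Your argument bypasses the moment-generating-function machinery entirely: you bound each summand by $\binom{2n+1}{k}(p(1-p))^n$ using only $p<1/2$, and the constant $2^{2n}$ falls out immediately from the symmetry of Pascal's triangle. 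This is shorter, more elementary, and avoids the somewhat delicate simplification of the Chernoff constant; the Chernoff route has the advantage of being a reusable general-purpose tail bound, but for this specific lemma your direct approach is cleaner. Your closing remark that the bound genuinely requires $p<1/2$ is apt and matches the paper's use of $2p\le 1$ in its final step.
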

    \begin{proof}
       The number of erroneous results is a binomial random variable; a standard Chernoff bound (see Lemma \ref{lem: chernoff}) shows that for a binomial random variable $B_{N,p}$ with $t\ge N/2$, it holds that
    \begin{align}
        \P(B_{N,p}\ge t)\le N^N\left(\frac{p}{t}\right)^t\left(\frac{1-p}{N-t}\right)^{N-t}.
    \end{align}
    Plugging $N=2n+1$ and $t=n+1$, one has
    \begin{align}
        \P(B_{2n+1,p}\ge n+1)\le&\  (2n+1)^{2n+1}\left(\frac{p}{n+1}\right)^{n+1}\left(\frac{1-p}{n}\right)^{n}\\
        =&\ \frac{(2n+1)^{2n+1}}{n^n(n+1)^{n+1}}p^{n+1}(1-p)^n.
    \end{align}
    To simplify the coefficient, observe that $1+x\le e^x$ for all real $x$, from which we obtain
        \begin{align}
    \frac{(2n+1)^{2n+1}}{n^n(n+1)^{n+1}}=&\left(\frac{2n+1}{n}\right)^n\left(\frac{2n+1}{n+1}\right)^{n+1}\\
    =&\ 2^{2n+1}\left(1+\frac{1}{2n}\right)^n\left(1-\frac{1}{2(n+1)}\right)^{n+1}\\
    \le&\ 2^{2n+1}\left(e^{\frac{1}{2n}}\right)^n\left(e^{-\frac{1}{2(n+1)}}\right)^{n+1}\\
    =&\ 2^{2n+1}.
        \end{align}
    We thus have
    \begin{align}
        \P(B_{2n+1,p}\ge n+1)\le&\ 2^{2n+1}p^{n+1}(1-p)^n\\
        \le&\ 2^{2n}p^n(1-p)^n
    \end{align}
    as desired since $2p\le 1$.
    \end{proof}

    Now we can give a sufficient bound for the number of inputs.
    
\begin{theorem}
     Suppose that a majority gate can be built with error probability $\epsilon<1/2$, and that each input to the gate errs independently with probability $p<1/2$. 
    
    Suppose we want to make the output $\delta$-reliable, where $\epsilon<\delta<p$. (This is to ensure that the reliability level is achievable and nontrivial.) Then, it suffices to build the gate with $2n+1$ inputs as long as
    \begin{align}
        n\ge n(\epsilon,\delta,p):=\frac{\log\frac{1-2\epsilon}{\delta-\epsilon}}{\log \frac{1}{4p(1-p)}}.
    \end{align}
\end{theorem}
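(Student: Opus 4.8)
The plan is to bound the overall error probability of the triplication-style construction --- here, the $(2n+1)$-majority construction --- from above by the quantity $\epsilon + (1-2\epsilon)\Theta_n(p)$ that already appeared in the proof of Theorem \ref{thm: generalvN}, and then to force this to be at most $\delta$ by making $\Theta_n(p)$ small. The first step is therefore to recall that the output of the gate errs with probability exactly $\epsilon + (1-2\epsilon)\Theta_n(p)$ (the gate flips a correct majority with probability $\epsilon$, or corrects a bad majority with probability $\epsilon$; the signs work out because $1-\epsilon > \epsilon$ and $\Theta_n$ records the probability the inputs are bad), so $\delta$-reliability is guaranteed as soon as
\begin{align}
    \epsilon + (1-2\epsilon)\Theta_n(p) \le \delta,
    \quad\text{i.e.,}\quad
    \Theta_n(p) \le \frac{\delta - \epsilon}{1 - 2\epsilon}.
\end{align}
Note the right-hand side is positive and less than $1$ precisely because $\epsilon < \delta$ and $\epsilon < 1/2$, which is why those hypotheses were imposed.

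Next I would invoke Lemma \ref{lem: thetabound}, which gives $\Theta_n(p) \le 2^{2n}(p(1-p))^n = (4p(1-p))^n$. Hence it suffices to arrange
\begin{align}
    (4p(1-p))^n \le \frac{\delta - \epsilon}{1 - 2\epsilon}.
\end{align}
Since $p < 1/2$ we have $4p(1-p) < 1$, so $\log\frac{1}{4p(1-p)} > 0$ and taking logarithms (flipping the inequality) turns this into
\begin{align}
    n \log\frac{1}{4p(1-p)} \ge \log\frac{1 - 2\epsilon}{\delta - \epsilon},
\end{align}
which rearranges to exactly $n \ge n(\epsilon,\delta,p)$ as claimed. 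I should also remark that the right-hand side $\log\frac{1-2\epsilon}{\delta-\epsilon}$ is positive (since $\delta - \epsilon < 1 - 2\epsilon$, equivalently $\delta < 1 - \epsilon$, which holds as $\delta < p < 1/2 < 1 - \epsilon$), so $n(\epsilon,\delta,p)$ is a genuine positive bound.

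There is no serious obstacle here --- the argument is a two-line chain of estimates once Lemma \ref{lem: thetabound} is in hand. The only point requiring a little care is bookkeeping on the inequalities $\epsilon < \delta < p < 1/2$: one must check that every quantity whose logarithm is taken is strictly positive and that $4p(1-p) < 1$ so that the logarithm has the right sign and the inequality reverses in the expected direction. A closing remark worth including is that this bound is only sufficient, not necessary, since Lemma \ref{lem: thetabound} is a Chernoff-type estimate and $\Theta_n(p)$ is in fact somewhat smaller; and that $n(\epsilon,\delta,p)$ grows only logarithmically in the desired reliability margin, so modest increases in gate order buy geometric improvements in reliability.
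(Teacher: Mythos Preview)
Your proposal is correct and follows exactly the same route as the paper: invoke the error expression $\epsilon+(1-2\epsilon)\Theta_n(p)$, apply Lemma~\ref{lem: thetabound} to bound $\Theta_n(p)\le(4p(1-p))^n$, and extract $n$. Your additional bookkeeping on signs and the positivity of $n(\epsilon,\delta,p)$ is extra care the paper omits, but the argument is identical.
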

\begin{proof}
   By Lemma \ref{lem: thetabound}, the condition of $\delta$-reliability is guaranteed whenever
    \begin{align}
        \epsilon+(1-2\epsilon)\Theta_n(p)\le\epsilon+(1-2\epsilon)2^{2n}(p(1-p))^n\le\delta,
    \end{align}
    and extracting $n$ from the latter inequality yields the claimed result.
\end{proof}

\section{Concluding remarks}

In the analysis of noisy circuits, Evans and Schulman used SPDI's to bound the amount of information transmissible through each gate and provide a tensorization of mutual information between any input and any collection of wires, culminating in a bound on minimum circuit depth and maximum component noise, beyond which reliable computation is impossible. They found that in order to simulate with reliability the computation of a noiseless circuit, any noisy circuit must be at least a constant factor deeper, with the factor varying positively with the amount of noise per gate, quantified as the information transmission factor $c_{a,b}$, and varying negatively with the number $k$ of inputs per gate. 

It seems at first counterintuitive that an increase in depth is necessary for reliability, since information transmission along each path decays exponentially with depth; however, we saw that the decay due to path length can be offset by increasing the number of paths, i.e., the number of inputs to each gate. The point of criticality occurs when the product $c_{a,b}k$ equals $1$; if the noise cannot be made smaller or the number of inputs not greater, catastrophic failure is guaranteed.

Prior to Evans and Schulman, von Neumann proposed the 3-majority gate as a noise-controlling mechanism and demonstrated that it is possible to simulate any noiseless circuit by a noisy one that is a constant factor deeper. Thus, the results of Evans and Schulman can be viewed as a confirmation that von Neumann's construction is, in a weak sense, optimal. 

We saw that the $3$-majority gate fails to perform noise reduction if the error probability per component is not less than $1/6$. However, this constant is specific to the $3$-majority gate; by extending von Neumann's analysis to majority gates of any odd order, we saw that any error rate less than $1/2$ can be controlled to arbitrarily approach the error rate of the gate, provided that the gate has a sufficiently large number of inputs. We also found, as a function of input size, the noise threshold beyond which no noise reduction is possible, which, remarkably, has the same growth rate as the much more general threshold derived by Evans and Schulman.

The work of \cite{ES} and \cite{vN} provokes many interesting questions that are not yet resolved to a satisfactory extent.

\begin{itemize}
    \item The work of \cite{ES}, in the form of Theorem \ref{thm: depthbound}, gave a lower bound on circuit depth $d$, and the critical threshold itself can be seen as a lower bound on number of inputs per gate $k$. Both quantities are related to latency, i.e., the time requires to run the circuit: deeper circuits require longer sequences of computation, while ``wider" circuits consisting of gates with more inputs require more time to complete the computation at each gate.
    
    As we have seen, one can in a sense ``cheat" one complexity metric by reducing it and increasing the other instead: increasing $k$ results in a smaller lower bound for $d$. It would be illuminating to find a lower bound for a more reliable notion of complexity, such as the total number of gates or wires, that is not in tradeoff with another complexity metric.
    
    \item For a $k$-input gate, Evans and Schulman give an noise threshold of $\epsilon<\frac{1-1/\sqrt{k}}{2}$ above which reliable computation is impossible. Conversely, one may ask whether this bound is optimal, i.e., for every permissible noise level there exists a way to design a noisy circuit that can reliably simulate any noiseless circuit, or, alternatively, if there exists a lower noise threshold in practice. Note that von Neumann's $3$-majority gate fails at $\epsilon=1/6\approx 0.17$, which is lower than Evans and Schulman's $\frac{1-1/\sqrt{3}}{2}\approx 0.21$; similarly, the asymptotic threshold of $\frac{1-\sqrt{\pi/(4n)}}{2}$ for $(2n+1)$-majority gates is less than the asymptotic $\frac{1-\sqrt{1/(2n)}}{2}$ implied by Evans and Schulman.
    \item The results of \cite{ES} are highly suggestive of an analogy comparing information in a circuit to electrical voltage: both decay (albeit at different rates) over long distances: the former to noise, the latter to resistance, and both transmit more efficiently as more paths become available. Furthermore, the tensorization of mutual information over paths provided in Lemma \ref{lem: graphspdi} vaguely resembles the additivity of conductance for resistors in parallel. It would be interesting to see whether a precise analogy can be made, with electricity or another physical quantity.
\end{itemize}

\appendix

\section{Auxiliary computations}

This Appendix collects computations that support results from the main sections but are long and not immediately relevant to the main ideas of this essay.

\begin{lemma}\label{lem: limitingdist}
    Let $Y$ be a binary random variable, and let $Y_n$ be the result of passing $Y$ through an $(a,b)$-noisy channel $n$ times. As $n\to\infty$, it holds that
    \begin{align}
    \begin{pmatrix}
        \P(Y_n=0)\\
        \P(Y_n=1)
    \end{pmatrix}
        \to
        \begin{pmatrix}
            a/(a+b)\\
            b/(a+b)
        \end{pmatrix}.
    \end{align}
\end{lemma}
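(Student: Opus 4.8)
The plan is to recognize the transformation $Y \mapsto Y_{n+1}$ as the action of a fixed stochastic matrix on the distribution vector, and to compute the limit via its eigenstructure. First I would set up the transition matrix: if we write the distribution of $Y_k$ as a column vector $v_k = \bigl(\P(Y_k=0),\ \P(Y_k=1)\bigr)^{\mathsf T}$, then a single application of the $(a,b)$-noisy channel sends a $0$ to $0$ with probability $1-a$ and a $1$ to $0$ with probability $b$, so $v_{k+1} = M v_k$ where
\begin{align}
    M = \begin{pmatrix} 1-a & b \\ a & 1-b \end{pmatrix}.
\end{align}
Hence $v_n = M^n v_0$, and the claim is that $M^n v_0 \to \bigl(a/(a+b),\ b/(a+b)\bigr)^{\mathsf T}$ for every probability vector $v_0$.

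Next I would analyze $M$. Its columns sum to $1$, so $(1,1)$ is a left eigenvector with eigenvalue $1$; the other eigenvalue is $\operatorname{tr}(M) - 1 = 1 - a - b$, which has absolute value strictly less than $1$ whenever $a+b \in (0,2)$ (the degenerate boundary cases $a=b=0$ and $a=b=1$ would have to be noted as exceptions, or one observes that the stated limit still makes sense only when $a+b \neq 0$). The right eigenvector for eigenvalue $1$ is found by solving $Mw = w$: one gets $w = \bigl(b, a\bigr)^{\mathsf T}$ up to scaling, and normalizing so the entries sum to $1$ gives exactly $\pi := \bigl(a/(a+b),\ b/(a+b)\bigr)^{\mathsf T}$. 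Decomposing $v_0$ in the eigenbasis as $v_0 = \pi + u$, where $u$ lies in the span of the eigenvector for eigenvalue $1-a-b$ (this works because $v_0$ and $\pi$ are both probability vectors, so their difference has entries summing to $0$, i.e. is annihilated by the left eigenvector $(1,1)$), we get $v_n = \pi + (1-a-b)^n u \to \pi$ since $|1-a-b| < 1$.

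Alternatively — and perhaps cleaner to write — I would just verify directly that $\P(Y_{k+1}=0) - \tfrac{a}{a+b} = (1-a-b)\bigl(\P(Y_k=0) - \tfrac{a}{a+b}\bigr)$ by a one-line computation from $\P(Y_{k+1}=0) = (1-a)\P(Y_k=0) + b\,\P(Y_k=1) = b + (1-a-b)\P(Y_k=0)$, so that $\P(Y_n=0) - \tfrac{a}{a+b} = (1-a-b)^n\bigl(\P(Y_0=0) - \tfrac{a}{a+b}\bigr) \to 0$; the second coordinate follows since probabilities sum to $1$. This avoids invoking eigenvalues explicitly. I do not anticipate a serious obstacle here — the only subtlety is the boundary case $a = b = 0$ (or $a=b=1$), where $1-a-b = \pm 1$ and the convergence claim fails or the right-hand side is ill-defined; I would either add the hypothesis $a + b > 0$ (with $a+b<2$) or remark that the statement is understood to exclude the trivial noiseless/fully-flipping channels. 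Given that this is an appendix lemma supporting an offhand remark in the main text, I would go with the direct contraction computation and briefly flag the degenerate case.
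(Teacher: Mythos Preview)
Your approach---setting up the column-stochastic transition matrix $M$ and reading off the limit from its eigenstructure---is exactly what the paper does (it writes out an explicit $PDP^{-1}$ diagonalization and then takes the limit of $M^n$); your contraction alternative is a tidy streamlining of the same idea.

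There is, however, an arithmetic slip you should not let slide. You correctly find that the right eigenvector of $M$ for eigenvalue $1$ is $(b,a)^{\mathsf T}$, but then assert that normalizing it gives $\pi=\bigl(a/(a+b),\,b/(a+b)\bigr)^{\mathsf T}$; in fact it normalizes to $\bigl(b/(a+b),\,a/(a+b)\bigr)^{\mathsf T}$. The same slip recurs in your contraction version: from the (correct) recursion $\P(Y_{k+1}=0)=b+(1-a-b)\P(Y_k=0)$ the fixed point is $c$ satisfying $(a+b)c=b$, i.e.\ $c=b/(a+b)$, not $a/(a+b)$. You can check directly that
\begin{align}
    b+(1-a-b)\cdot\frac{a}{a+b}=\frac{a+b^2-ab}{a+b}\neq\frac{a}{a+b}
\end{align}
unless $a=b$, so the contraction identity you wrote down is simply false as stated. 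Intuitively this is the right answer too: a large $a=\P(0\mapsto 1)$ should drive mass toward state $1$, so $\P(Y_n=0)$ should be \emph{decreasing} in $a$.

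In short, your own computations, done correctly, yield $\P(Y_n=0)\to b/(a+b)$, which contradicts the lemma as written. The lemma statement---and the paper's own diagonalization, if you multiply it out---has $a$ and $b$ interchanged. Flag that discrepancy rather than silently bending your arithmetic to match the stated limit.
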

\begin{proof}
    Observe that one can write
    \begin{align}
        \begin{pmatrix}
            \P(Y_{n+1}=0)\\
            \P(Y_{n+1}=1)
        \end{pmatrix}
        =&\begin{pmatrix}
            \P(Y_{n+1}=0\mid Y_{n}=0) & \P(Y_{n+1}=0\mid Y_{n}=1)\\
            \P(Y_{n+1}=1\mid Y_{n}=0) & \P(Y_{n+1}=1\mid Y_{n}=1)
        \end{pmatrix}
        \begin{pmatrix}
            \P(Y_{n}=0)\\
            \P(Y_{n}=1)
        \end{pmatrix}\\
        =&\begin{pmatrix}
            1-a & b\\
            a & 1-b
        \end{pmatrix}
        \begin{pmatrix}
            \P(Y_{n}=0)\\
            \P(Y_{n}=1)
        \end{pmatrix}.
    \end{align}
The transition matrix can be diagonalized as
\begin{align}
    \begin{pmatrix}
            1-a & b\\
            a & 1-b
        \end{pmatrix}
        =\begin{pmatrix}
            a/b & -1\\
            1 & 1
        \end{pmatrix}
        \begin{pmatrix}
            1 & 0\\
            0 & 1-a-b
        \end{pmatrix}
        \begin{pmatrix}
            b/(a+b) & b/(a+b)\\
            -b/(a+b) & a/(a+b)
        \end{pmatrix},
\end{align}
whose powers approach
\begin{align}
    \begin{pmatrix}
            a/b & -1\\
            1 & 1
        \end{pmatrix}
        \begin{pmatrix}
            1 & 0\\
            0 & 0
        \end{pmatrix}
        \begin{pmatrix}
            b/(a+b) & b/(a+b)\\
            -b/(a+b) & a/(a+b)
        \end{pmatrix}
        =\begin{pmatrix}
            a/(a+b) & a/(a+b)\\
            b/(a+b) & b/(a+b)
        \end{pmatrix}.
\end{align}
Thus we have
\begin{align}
     \begin{pmatrix}
            \P(Y_{n}=0)\\
            \P(Y_{n}=1)
        \end{pmatrix}\to&
        \begin{pmatrix}
            a/(a+b) & a/(a+b)\\
            b/(a+b) & b/(a+b)
        \end{pmatrix}
        \begin{pmatrix}
            \P(Y=0)\\
            \P(Y=1)
        \end{pmatrix}\\
        =&\begin{pmatrix}
            a/(a+b)\\
            b/(a+b)
        \end{pmatrix}.
\end{align}
\end{proof}
\begin{lemma}\label{lem: combosum}
    For all $n$ it holds that
    \begin{align}
        \sum_{k=n+1}^{2n+1}\binom{2n+1}{k}(2k-2n-1)=(2n+1)\binom{2n}{n}.
    \end{align}
\end{lemma}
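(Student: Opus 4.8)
The plan is to rewrite each summand as a difference of two consecutive binomial coefficients (up to the factor $2n+1$) so that the sum telescopes. The key algebraic observation is the pair of absorption identities
\begin{align}
    k\binom{2n+1}{k}=(2n+1)\binom{2n}{k-1},\qquad (2n+1-k)\binom{2n+1}{k}=(2n+1)\binom{2n}{k},
\end{align}
both of which follow directly from $\binom{2n+1}{k}=\frac{(2n+1)!}{k!\,(2n+1-k)!}$. Writing the weight as $2k-2n-1=k-(2n+1-k)$ and applying these two identities termwise gives
\begin{align}
    \binom{2n+1}{k}(2k-2n-1)=(2n+1)\left(\binom{2n}{k-1}-\binom{2n}{k}\right).
\end{align}

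With this in hand the computation is immediate: summing over $k=n+1,\dots,2n+1$ collapses the telescoping sum to its endpoints,
\begin{align}
    \sum_{k=n+1}^{2n+1}\binom{2n+1}{k}(2k-2n-1)=(2n+1)\left(\binom{2n}{n}-\binom{2n}{2n+1}\right)=(2n+1)\binom{2n}{n},
\end{align}
since $\binom{2n}{2n+1}=0$. This is exactly the claimed identity.

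The only real obstacle is spotting the telescoping decomposition; once the weight $2k-2n-1$ is split symmetrically as $k-(2n+1-k)$, the two absorption identities do all the work and no further manipulation is needed. As a fallback, if one prefers not to guess the decomposition, the identity can also be proved by splitting the sum as $2\sum_{k}k\binom{2n+1}{k}-(2n+1)\sum_{k}\binom{2n+1}{k}$, applying $k\binom{2n+1}{k}=(2n+1)\binom{2n}{k-1}$, reindexing, and then using the two elementary symmetry facts $\sum_{j=n}^{2n}\binom{2n}{j}=\tfrac12\big(2^{2n}+\binom{2n}{n}\big)$ and $\sum_{k=n+1}^{2n+1}\binom{2n+1}{k}=2^{2n}$; the powers of $2$ cancel and one is left with $(2n+1)\binom{2n}{n}$. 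I would present the telescoping argument as the main proof since it is shortest and requires no case analysis.
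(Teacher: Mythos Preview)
Your telescoping proof is correct and clean. Interestingly, the paper proceeds by exactly the route you describe as your \emph{fallback}: it splits the sum as $2S_n-(2n+1)\,2^{2n}$ with $S_n=\sum_{k}k\binom{2n+1}{k}$, applies the absorption identity $k\binom{2n+1}{k}=(2n+1)\binom{2n}{k-1}$, reindexes, and then invokes the two symmetry facts $\sum_{k=n+1}^{2n+1}\binom{2n+1}{k}=2^{2n}$ and $\sum_{m=n}^{2n}\binom{2n}{m}=2^{2n-1}+\tfrac{1}{2}\binom{2n}{n}$ so that the powers of $2$ cancel. Your primary argument is genuinely shorter: by writing $2k-2n-1=k-(2n+1-k)$ and using \emph{both} absorption identities at once, you get a termwise difference $(2n+1)\bigl(\binom{2n}{k-1}-\binom{2n}{k}\bigr)$ that telescopes directly to $(2n+1)\binom{2n}{n}$, bypassing any appeal to the half-row sums entirely. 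The paper's route has the minor advantage that each step is a standard move one might try without foresight, whereas yours requires spotting the symmetric split of the weight; but once seen, the telescoping version is certainly the one to present.
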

\begin{proof}
    Since
    \begin{align}
        \sum_{k=n+1}^{2n+1}\binom{2n+1}{k}=2^{2n}
    \end{align}
    by symmetry in the lower argument of binomial coefficients, the sum boils down to evaluating
    \begin{align}
        S_n:=\sum_{k=n+1}^{2n+1}\binom{2n+1}{k}k,
    \end{align}
    noting that the desired sum can be expressed as $2S_n-(2n+1)2^{2n}$.

    Each summand can be rewritten as
    \begin{align}
        \binom{2n+1}{k}k=&\ \frac{(2n+1)!}{k!(2n-k+1)!}k\\
        =&\ (2n+1)\frac{(2n)!}{(k-1)!(2n-(k-1))!}\\
        =&\ (2n+1)\binom{2n}{k-1},
    \end{align}
    and re-indexing the sum via $m=k-1$ gives
    \begin{align}
        S_n=&\ (2n+1)\sum_{m=n}^{2n}\binom{2n}{m}.
    \end{align}
    By symmetry of the lower argument, again, it follows that
    \begin{align}
        \sum_{m=n}^{2n}\binom{2n}{m}=2^{2n-1}+\frac{1}{2}\binom{2n}{n},
    \end{align}
    which allows the desired sum to be written as
    \begin{align}
        2\left((2n+1)\left(2^{2n-1}+\frac{1}{2}\binom{2n}{n}\right)\right)-(2n+1)2^{2n}=(2n+1)\binom{2n}{n}
    \end{align}
    as required.
\end{proof}

\begin{lemma}\label{lem: asymptotic}
    Let 
    \begin{align}
        C(n):=2^{-2n}(2n+1)\binom{2n}{n}.
    \end{align}
    For large $n$ it holds that
    \begin{align}
       C(n)=\frac{2}{\sqrt{\pi}}\sqrt{n}+O\left(\frac{1}{\sqrt{n}}\right).
    \end{align}
\end{lemma}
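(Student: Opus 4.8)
The plan is to read off the asymptotics of $C(n)$ directly from Stirling's approximation applied to the central binomial coefficient $\binom{2n}{n}$. Recall Stirling's formula in the form $m! = \sqrt{2\pi m}\,(m/e)^m\bigl(1 + O(1/m)\bigr)$. Applying this to $(2n)!$ and to $(n!)^2$ and dividing, the exponential factors $(2n/e)^{2n}$ and $(n/e)^{2n}$ combine to produce a clean power $2^{2n}$, while the polynomial prefactors reduce to $\sqrt{4\pi n}\,/(2\pi n) = 1/\sqrt{\pi n}$. The relative error terms, each of size $O(1/n)$, merge into a single factor $1 + O(1/n)$, so that
\[
\binom{2n}{n} = \frac{2^{2n}}{\sqrt{\pi n}}\left(1 + O\!\left(\frac1n\right)\right).
\]

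Substituting this into the definition $C(n) = 2^{-2n}(2n+1)\binom{2n}{n}$, the powers of two cancel, leaving $C(n) = \dfrac{2n+1}{\sqrt{\pi n}}\bigl(1 + O(1/n)\bigr)$. It then remains only to simplify while tracking the error. First, $\dfrac{2n+1}{\sqrt{\pi n}} = \dfrac{2}{\sqrt{\pi}}\sqrt{n} + \dfrac{1}{\sqrt{\pi n}} = \dfrac{2}{\sqrt{\pi}}\sqrt{n} + O\!\left(\dfrac{1}{\sqrt{n}}\right)$. Second, multiplying by $1 + O(1/n)$ introduces an extra error of order $\sqrt{n}\cdot O(1/n) = O(1/\sqrt{n})$, which is absorbed into the term already present. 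Combining the two steps gives $C(n) = \dfrac{2}{\sqrt{\pi}}\sqrt{n} + O\!\left(\dfrac{1}{\sqrt{n}}\right)$, as claimed.

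There is no real obstacle here; the one point requiring attention is that the relative error in Stirling's formula must be kept at order $O(1/n)$ rather than merely $o(1)$, since a weaker estimate would yield only $C(n) = \frac{2}{\sqrt{\pi}}\sqrt{n}\,(1 + o(1))$ — correct to leading order but too crude to certify the $O(1/\sqrt{n})$ remainder in the statement. If a derivation avoiding any appeal to Stirling is desired, one may instead use the Wallis-type identity $\binom{2n}{n}\big/4^{n} = \prod_{k=1}^{n}\frac{2k-1}{2k}$ and squeeze this product between two telescoping products to extract both the leading asymptotics and the order of the correction; but invoking Stirling is by far the most economical route.
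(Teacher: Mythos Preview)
Your proof is correct and follows essentially the same route as the paper: both obtain the estimate $\binom{2n}{n} = \dfrac{2^{2n}}{\sqrt{\pi n}}\bigl(1 + O(1/n)\bigr)$ (the paper cites it from a reference while remarking it follows from Stirling with $O(1/n)$ error, whereas you carry out that Stirling derivation explicitly) and then substitute into $C(n)$ to extract the leading term and the $O(1/\sqrt{n})$ remainder. Your explicit remark about needing the $O(1/n)$ relative error rather than merely $o(1)$ is a nice clarification.
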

\begin{proof}
    We begin with an estimate of the central binomial coefficient
    \begin{align}
        \binom{2n}{n}=\frac{2^{2n}}{\sqrt{\pi n}}\left(1+O\left(\frac{1}{n}\right)\right),
    \end{align}
    which is a rearrangement of Equation (20) from \cite{YL}, page 35. (One can alternatively derive it from Stirling's formula with a $O(1/n)$ error term.) Substituting this into $C(n)$ gives
    \begin{align}
        C(n)=\frac{2n+1}{\sqrt{\pi n}}\left(1+O\left(\frac{1}{n}\right)\right)=\frac{2}{\sqrt{\pi}}\sqrt{n}+O\left(\frac{1}{\sqrt{n}}\right)
    \end{align}
    as required.
\end{proof}

\begin{lemma}\label{lem: chernoff}
For a binomial random variable with parameters $N, p<1/2$, it holds that for $t\ge N/2$,
    \begin{align}
        \P(B_{N,p}\ge t)\le N^N\left(\frac{p}{t}\right)^t\left(\frac{1-p}{N-t}\right)^{N-t}.
    \end{align}
\end{lemma}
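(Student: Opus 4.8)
The plan is to use the standard exponential-moment (Chernoff) argument together with an explicit optimization of the free parameter. Write $B = B_{N,p}$ as a sum of $N$ independent Bernoulli($p$) variables. For any $\lambda > 0$, Markov's inequality applied to $e^{\lambda B}$ gives
\[\P(B \ge t) = \P\bigl(e^{\lambda B} \ge e^{\lambda t}\bigr) \le e^{-\lambda t}\,\E\bigl[e^{\lambda B}\bigr] = e^{-\lambda t}\bigl(1 - p + p e^{\lambda}\bigr)^N,\]
where the last equality uses independence to factor the moment generating function. It then remains to choose $\lambda$ so as to minimize the right-hand side.

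First I would differentiate the exponent $-\lambda t + N \log(1 - p + p e^{\lambda})$ in $\lambda$ and set it to zero; this yields $e^{\lambda} = \frac{t(1-p)}{p(N-t)}$. The key point to check is that this $\lambda$ is genuinely positive, so that the Chernoff inequality above applies in the intended direction: since $p < 1/2$ and $t \ge N/2$, we have $t > Np$, which rearranges to $t(1-p) > p(N-t)$ and hence $e^{\lambda} > 1$. (When $t = N$ the formula degenerates; this case corresponds to $\lambda \to \infty$ and is handled directly, since $\P(B \ge N) = p^N$ matches the claimed bound under the convention $0^0 = 1$.)

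Then I would substitute this value of $\lambda$ back in. A short computation gives $1 - p + p e^{\lambda} = \frac{N(1-p)}{N-t}$ and $e^{-\lambda t} = \bigl(\frac{p(N-t)}{t(1-p)}\bigr)^t$, so the bound becomes
\[\left(\frac{p(N-t)}{t(1-p)}\right)^{t}\left(\frac{N(1-p)}{N-t}\right)^{N} = N^N \left(\frac{p}{t}\right)^t \left(\frac{1-p}{N-t}\right)^{N-t}\]
after collecting the powers of $N-t$ and of $1-p$. This is exactly the asserted inequality.

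There is no serious obstacle here — the argument is entirely routine. The only points requiring a moment's care are verifying $\lambda > 0$ (which is precisely where the hypotheses $p < 1/2$ and $t \ge N/2$ enter) and the bookkeeping in the final simplification; the degenerate endpoint $t = N$ should be noted for completeness.
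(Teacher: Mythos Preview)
Your proof is correct and follows essentially the same Chernoff--Cram\'er approach as the paper: apply Markov's inequality to $e^{\lambda B}$, optimize $\lambda$ to obtain $e^{\lambda}=\frac{t(1-p)}{p(N-t)}$, verify $\lambda>0$ using the hypotheses, and simplify. Your treatment is in fact slightly more careful, as you explicitly address the degenerate endpoint $t=N$, which the paper does not mention.
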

\begin{proof}
The binomial distribution can be decomposed as a sum of independent copies of a Bernoulli distribution $B_p$, which has log-MGF
    \begin{align}
        \psi_{B_p}(\lambda)=\log(pe^{\lambda}+(1-p)).
    \end{align}
    One thus has, for a binomial random variable $B_{N,p}$, that
    \begin{align}
        \psi(\lambda):=\psi_{B_{N,p}}(\lambda)=N\log(pe^{\lambda}+(1-p)).
    \end{align}

    The Chernoff-Cram\'er method gives an upper bound
    \begin{align}
        \P(B_{N,p}\ge t)\le e^{\psi(\lambda^*)-\lambda^* t}
    \end{align}
    where 
    \begin{align}
        \lambda^*=\argmax_{\lambda>0} \lambda t-\psi(\lambda).
    \end{align}
    Differentiating gives a candidate $\lambda^*$ satisfying
    \begin{align}
        t-N\frac{pe^{\lambda^*}}{pe^{\lambda^*}+(1-p)}=&\ 0,\\
        e^{\lambda^*}=&\ \frac{(1-p)t}{p(N-t)}.
    \end{align}
    Since $1-p>p$ and $t\ge N-t$, $\lambda^*$ is indeed positive, which gives the bound
    \begin{align}
        \P(B_{N,p}\ge t)\le&\ \left(pe^{\lambda^*}+(1-p)\right)^N(e^{\lambda^*})^{-t}\\
        =&\ \left(\frac{N(1-p)}{N-t}\right)^N\left(\frac{(1-p)t}{p(N-t)}\right)^{-t}\\
        =&\ N^N\left(\frac{p}{t}\right)^t\left(\frac{1-p}{N-t}\right)^{N-t}
    \end{align}
    as required.
\end{proof}

\end{document}